\documentclass[a4paper,11pt]{article}

\usepackage[utf8]{inputenc}
\usepackage[english]{babel}
\usepackage[margin=1in]{geometry}
\usepackage{amsmath,amssymb,mathtools}
\usepackage{amsthm}
\usepackage{booktabs}
\usepackage{array}
\usepackage{capt-of}
\usepackage{microtype}
\usepackage{xcolor}
\usepackage{tikz}
\usetikzlibrary{arrows.meta,positioning,decorations.pathreplacing}
\usepackage{hyperref}
\hypersetup{hidelinks}

\newtheorem{theorem}{Theorem}

\newtheorem{lemma}[theorem]{Lemma}
\newtheorem{corollary}[theorem]{Corollary}
\theoremstyle{remark}

\newcommand{\U}{\mathrm{U}}

\newcommand{\E}{\mathbb{E}}
\newcommand{\Pp}{\mathbb{P}}
\newcommand{\C}{\mathbb{C}}
\newcommand{\R}{\mathbb{R}}
\newcommand{\tr}{\operatorname{tr}}
\newcommand{\diag}{\operatorname{diag}}
\newcommand{\Lip}{\operatorname{Lip}}
\newcommand{\Var}{\operatorname{Var}}
\newcommand{\HS}{\mathrm{HS}}
\newcolumntype{L}[1]{>{\raggedright\arraybackslash}p{#1}}

\begin{document}

\title{Weak Typicality of von Neumann Entanglement Entropy in Gaussian Boson Sampling}

\author{Hongru Zhao\\
\small School of Statistics, University of Minnesota, Minneapolis, Minnesota 55455, USA\\
\small \texttt{zhao1118@umn.edu}}
\date{}

\maketitle

\begin{abstract}
We study the von Neumann entanglement entropy generated by a Haar distributed
passive interferometer acting on \(n\) equally squeezed input modes with fixed
nonzero squeezing strength \(s\).  Previous work established proportional
weak typicality for integer R\'{e}nyi orders \(\alpha\ge2\) and stated a
sublinear von Neumann result, while the proportional von Neumann case remained
open.  For a subsystem of \(k_n\) modes satisfying
\(k_n/n\to r\in(0,1)\), we prove the summable relative concentration bound
\[
 \mathbb{P}\!\left(
 \left|\frac{S_{1,n}}{\mathbb{E}S_{1,n}}-1\right|
 \ge\varepsilon\right)
 \le
 2\exp\!\left[-\frac{c_{s,r}\varepsilon^2n^2}{\log^2(en)}\right].
\]
The proof represents the entropy as a singular value statistic of a principal
block of \(UU^{\mathsf T}\), where \(U\) denotes the unitary interferometer.
It regularizes the logarithmic
singularity at the pure symplectic endpoint and applies concentration on the
unitary group.  The result establishes
proportional von Neumann weak typicality and further implies almost sure
relative convergence, a typical volume law, and the variance bound
\(\operatorname{Var}S_{1,n}=O_s(\log^2 n)\).  An accompanying Lean~4
development verifies the proof chain.
\end{abstract}

\section{Introduction}

Gaussian boson sampling (GBS) prepares squeezed states, mixes the modes in a
linear optical interferometer, and measures output photon numbers
~\cite{Hamilton2017,Deshpande2022,HangleiterEisert2023}.  Before measurement,
the ideal output is a pure bosonic Gaussian state.  Its entanglement across a
division of the modes is measured by the von Neumann entropy of either reduced
state.  Gaussian states admit an exact description through their first and
second moments~\cite{Weedbrook2012,Serafini2017}, making this entropy accessible
to analysis.

The finite dimensional Page curve was introduced by Page, and Sen subsequently
proved Page's formula~\cite{Page1993,Sen1996}.  Bosonic Gaussian Page curves
for passive random interferometers were studied by Iosue \emph{et al.}
~\cite{Iosue2023}, and Youm \emph{et al.} derived the asymptotically exact von
Neumann mean curve for the fixed equal squeezing ensemble~\cite{Youm2025}.
Earlier rigorous results covered several related but different settings.
Serafini, Dahlsten, Plenio, and collaborators
~\cite{SerafiniDahlstenPlenio2007,SerafiniDahlstenGrossPlenio2007} established
fixed finite subsystem concentration for microcanonical hard energy and
canonical Boltzmann measures on pure Gaussian states.  Fukuda and
Koenig~\cite{FukudaKoenig2019} studied deterministic squeezed inputs under conditions
that exclude proportional subsystems.  Their exact assumptions are recorded
in Table~\ref{tab:prior-landscape}.

For the fixed equal squeezing Haar ensemble studied here, Iosue \emph{et al.}
~\cite{Iosue2023} proved proportional weak typicality for R\'{e}nyi entropy of order
2 and stated von Neumann weak typicality for sublinear subsystems.  Youm \emph{et al.}
~\cite{Youm2025} obtained asymptotically exact mean Page curves and extended
the proportional R\'{e}nyi result to every integer order \(\alpha\ge2\).  They
explicitly left proportional von Neumann typicality open.

The difficulty is that the derivative of the one mode entropy profile diverges
logarithmically as a reduced symplectic eigenvalue approaches its pure value
\(\nu=1\).  We avoid a direct variance calculation by expressing the entropy
through singular values, regularizing this endpoint, and applying Haar
concentration.  The resulting tail bound proves weak typicality and also gives
a typical volume law, almost sure convergence, and the variance bound
\(\Var S_{1,n}=O_s(\log^2 n)\).

\begin{figure}[t]
\centering
\begin{tikzpicture}[
  font=\scriptsize,
  box/.style={draw,rounded corners,minimum width=1.28cm,minimum height=0.72cm,align=center,inner sep=3pt},
  line/.style={-Latex,semithick}
]
\node[box,fill=gray!10] (u) {\(U\)};
\node[box,fill=blue!7,right=0.35cm of u] (q) {\(UU^{\mathsf T}\)};
\node[box,fill=blue!7,right=0.35cm of q] (b) {\(B_U\)};
\node[box,fill=orange!10,right=0.35cm of b] (tau) {\(\{\tau_j\}\)};
\node[box,fill=orange!10,below=0.38cm of tau] (nu) {\(\{\nu_j\}\)};
\node[box,fill=green!8,left=0.35cm of nu] (s) {\(S_{1,n}\)};
\node[box,fill=green!8,left=0.35cm of s] (f) {\(F_{n,\eta}\)};
\node[box,fill=purple!8,left=0.35cm of f] (c) {Haar\\tail};
\foreach \a/\bb in {u/q,q/b,b/tau,tau/nu,nu/s,s/f,f/c}{\draw[line] (\a) -- (\bb);}
\end{tikzpicture}
\caption{Proof pipeline, read across the upper row and then back across the lower row.  Here \(U\) is the Haar distributed interferometer, \(B_U\) is the selected principal block of \(UU^{\mathsf T}\), \(\tau_j\) are its singular values, \(\nu_j\) are the resulting symplectic eigenvalues, \(S_{1,n}\) is the von Neumann entropy, and \(F_{n,\eta}\) is its endpoint regularized approximation.  See Eqs.~\eqref{eq:B-definition}, \eqref{eq:tau-definition}, \eqref{eq:symplectic-eigenvalues}, \eqref{eq:entropy-spectral-statistic}, and \eqref{eq:regularized-statistic}.  Haar concentration is applied to \(F_{n,\eta}\).}
\label{fig:model}
\end{figure}

\newcommand{\priorlandscapetable}{%
\clearpage
\noindent\begin{minipage}{\textwidth}
\begin{center}
\scriptsize
\setlength{\tabcolsep}{2.5pt}
\captionof{table}{Selected published results directly relevant to the present
theorem.  Weak typicality means \(S_n/\E S_n\to1\) in probability, whereas
strong typicality means \(S_n-\E S_n\to0\) in probability.  Results are
reported with the centers and assumptions used in the cited sources;
conditional statements and technical qualifications are marked explicitly.}
\label{tab:prior-landscape}
\begin{tabular}{@{}L{0.13\textwidth}L{0.20\textwidth}L{0.11\textwidth}L{0.21\textwidth}L{0.27\textwidth}@{}}
\toprule
Reference & Ensemble and squeezing & Entropy & Subsystem regime & Conclusion \\
\midrule
Serafini \emph{et al.} (2007)~\cite{SerafiniDahlstenGrossPlenio2007}, building on Ref.~\cite{SerafiniDahlstenPlenio2007}
& Microcanonical hard energy constraint and canonical Boltzmann measure on pure Gaussian states; squeezing variables are sampled
& von Neumann
& Fixed finite \(m\) as \(n\to\infty\), at fixed thermodynamic temperature \(T\)
& The reduced symplectic eigenvalues converge to \(1+T/2\); equivalently, the mean entropy converges to the corresponding \(m\) mode thermal entropy, and \(\Var S_m\to0\).  Ref.~\cite{SerafiniDahlstenGrossPlenio2007} measures entropy in bits.  Section~5.4 also gives numerical evidence at proportional size; see (e). \\
\addlinespace
Fukuda and Koenig (2019)~\cite{FukudaKoenig2019}
& Haar passive orbit of a deterministic, possibly unequal squeezing matrix \(Z_n\); \(\|Z_n\|_\infty\le Cn^\zeta\); \(\mu\le\lambda(n)\le C'\), \(\mu>1\)
& von Neumann
& \(k_n\le Kn^\kappa\); \(8\zeta+3\kappa<1\) and \(\zeta+3\kappa<1\)
& Theorem~3.10 proves additive concentration around the deterministic thermal value \(k_nG(\lambda(n))\), with a stretched exponential tail; see (a). \\
\addlinespace
Iosue \emph{et al.} (2023), Cor.~7~\cite{Iosue2023}
& Haar passive orbit; fixed equal \(s\ne0\)
& R\'{e}nyi entropy of order 2
& Fixed \(r=k/n\in(0,1)\) for weak; \(k_n=o(n)\) for strong
& Weak at fixed proportional ratio; strong in the stated sublinear regime. \\
\addlinespace
Iosue \emph{et al.} (2023), Cor.~8~\cite{Iosue2023}
& Haar passive orbit; fixed equal \(s\ne0\)
& von Neumann
& Every \(k_n=o(n)\)
& Corollary~8 states weak typicality.  No proportional subsystem conclusion; see (d). \\
\addlinespace
Iosue \emph{et al.} (2023), Rem.~11~\cite{Iosue2023}
& Haar passive orbit; deterministic, possibly unequal \(n\) dependent profile
& R\'{e}nyi entropy of order 2 and von Neumann
& R\'{e}nyi 2: \(k_n=o(n)\) or \(s_{\max,n}=o(1)\), where \(s_{\max,n}:=\max_j|s_{j,n}|\); von Neumann: \(k_n=o(n)\)
& Conditional claims of weak typicality assuming Conjecture~10; see (b) and (d). \\
\addlinespace
Youm \emph{et al.} (2025)~\cite{Youm2025}
& Haar passive orbit; fixed equal \(s\ne0\)
& Integer R\'{e}nyi \(\alpha\ge2\)
& Proportional \(k_n=\Theta(n)\), treated at fixed \(r=k/n\in(0,1)\)
& Weak typicality for every integer \(\alpha\ge2\); see (c). \\
\addlinespace
Youm \emph{et al.} (2025)~\cite{Youm2025}
& Haar passive orbit; fixed equal squeezing
& von Neumann
& Fixed \(r=k/n\in(0,1)\)
& Asymptotically exact mean Page curve; proportional typicality not established. \\
\addlinespace
This work
& Same fixed equal squeezing Haar ensemble, \(s\ne0\)
& von Neumann
& \(k_n/n\to r\in(0,1)\)
& Relative tail in Eq.~\eqref{eq:main-tail}; almost sure for every coupling with Haar marginals; \(\Var S_{1,n}=O_s(\log^2(en))\). \\
\bottomrule
\end{tabular}
\par\smallskip
\raggedright
\textit{Notes.}
(a) In the notation of Ref.~\cite{FukudaKoenig2019}, \(Z_n\) is the
deterministic squeezing matrix, \(\zeta\) bounds the growth of its operator
norm, \(\kappa\) bounds subsystem growth, \(\lambda(n)\) is the associated
mean energy parameter, the fixed constants \(\mu>1\) and \(C'\) give its
uniform lower and upper bounds, \(G\) is the one mode thermal entropy, and the source
denotes its inverse temperature factor by \(\beta(\mu)\).  Theorem~3.10 of
that reference assumes fixed exponents \(\zeta,\kappa\), centers the tail at
the deterministic thermal value
\(k_nG(\lambda(n))\), and gives
\(\exp[-c\varepsilon^4n^{1-8\zeta-3\kappa}/\beta(\mu)^4]\), above its
\(\varepsilon\) threshold.  This is the result summarized as strong typicality
in Refs.~\cite{Iosue2023,Youm2025}.  For \(\zeta=0\), \(\kappa<1/3\); the
theorem assumes a fixed exponent \(\kappa\), rather than every sequence
\(k_n=o(n^{1/3})\).
(b) Remark~11 of Ref.~\cite{Iosue2023} is conditional, not an unconditional
theorem.
(c) Section~IV.D and Appendix~B.4 of Ref.~\cite{Youm2025} establish
proportional weak typicality for every integer \(\alpha\ge2\).  Its Table~I
explicitly attributes the sublinear strong typicality column to
Ref.~\cite{Iosue2023}.
(d) Corollary~8 and the von Neumann part of Remark~11 are quoted as stated in
Ref.~\cite{Iosue2023}.  Their displayed argument gives
\(\Var S_{1,n}/n^2=O((k_n/n)^2)\).  Because Definition~5 normalizes relative
weak typicality by \((\E S_{1,n})^2\), this estimate alone does not establish
the stated relative convergence for an arbitrary sublinear sequence.  These
claims are not used in the present proof.
(e) Section~4 of Ref.~\cite{SerafiniDahlstenGrossPlenio2007} is a rigorous
fixed finite \(m\) theorem.  Its Section~5.4 separately studies
\(m/n\to r>0\): it states that both the mean and variance diverge and
reports numerical evidence that
\(\Var S_m/(\E S_m)^2\to0\).  That proportional size observation is numerical,
not part of the fixed \(m\) proof.
\end{center}
\end{minipage}
\clearpage}

Thus this paper closes the fixed equal squeezing, proportional size, von
Neumann weak typicality problem.  Strong typicality in fact
fails at every proportional ratio \(r\ne\tfrac12\), as proved from the Jacobi
linear statistics central limit theorem in Appendix~\ref{app:no-strong}.
Other recent work addresses different questions: Roy~\cite{Roy2025} studied a
different Gaussian ensemble, while Shou \emph{et al.}~\cite{Shou2026} studied
finite depth optical networks.

\subsection*{Main contributions}

The main result is the quantitative relative tail in
Eq.~\eqref{eq:main-tail}.  It yields weak typicality, a typical volume law,
almost sure convergence, the variance bound \(\Var S_{1,n}=O_s(\log^2 n)\),
logarithmic tightness, and concentration
about the published Page curve value at each fixed proportional ratio.  An accompanying Lean~4 development,
archived as version~1.0.0 on Zenodo~\cite{Zhao2026LeanVerification}, formalizes
the proof and its main corollaries.  Appendix~\ref{app:formal-scope} states the
exact verification scope.

\section{Model, definitions, and main result}
\label{sec:model}

\subsection{Gaussian boson sampling state}

We consider the ideal, lossless and noiseless premeasurement state, with pure
zero displacement squeezed vacuum inputs and a common squeezing axis, as in
the standard Gaussian boson sampling preparation of Ref.~\cite{Hamilton2017}.
The random state ensemble is exactly the equal squeezing specialization of the
fixed squeezing, passive Haar model of Iosue \emph{et al.}, whose setup is
given in Sec.~2 and App.~A.1 of Ref.~\cite{Iosue2023}.  Standard background and
the normalization conventions used below can be found in Sec.~II of
Ref.~\cite{Weedbrook2012} and Chs.~2--3 of Ref.~\cite{Serafini2017}.

For mode \(j\), let \(a_j\) and \(a_j^\dagger\) be the annihilation and creation
operators, with \([a_j,a_\ell^\dagger]=\delta_{j\ell}\), and define
\(q_j=a_j+a_j^\dagger\) and \(p_j=i(a_j^\dagger-a_j)\).  Operator hats are
suppressed throughout.  We use the quadrature ordering
\[
 R=(q_1,\ldots,q_n,p_1,\ldots,p_n)^{\mathsf T}
\]
with canonical commutation relations
\(
 [q_j,p_\ell]=2i\delta_{j\ell}
\)
and all other quadrature commutators zero.  This is the \(\hbar=2\)
normalization of Eqs.~(5)--(7) of Ref.~\cite{Weedbrook2012}.

For a state \(\rho\), write
\(\langle A\rangle_\rho=\tr(\rho A)\).  For phase space indices
\(a,b\in\{1,\ldots,2n\}\), let
\(\Delta R_a=R_a-\langle R_a\rangle_\rho\).  The \((a,b)\) entry of the
covariance matrix \(\sigma\) is
\[
 \sigma_{ab}=\frac12\left\langle
 \Delta R_a\Delta R_b+\Delta R_b\Delta R_a
 \right\rangle_\rho.
\]
Thus \(\sigma_{ab}\) is a symmetrized second moment, not a density matrix
element.  Gaussian states are determined by their first and second moments;
the covariance definition is Eq.~(15) of Ref.~\cite{Weedbrook2012} and agrees
with Sec.~2 of Ref.~\cite{Iosue2023}.  Displacements do not change the
subsystem entropy, so we set the first moments to zero; see Sec.~II.B.2 of
Ref.~\cite{Weedbrook2012}.  In this normalization the vacuum covariance is
\(I_{2n}\), the uncertainty relation is \(\sigma+i\Omega_n\succeq0\), where
\(\Omega_n=\left(\begin{smallmatrix}0&I_n\\-I_n&0\end{smallmatrix}\right)\)
in the chosen ordering, and physical symplectic eigenvalues satisfy
\(\nu_j\ge1\); see Eqs.~(17), (44), and (45) of
Ref.~\cite{Weedbrook2012}.

Each input mode is prepared in a single mode squeezed vacuum with the same
fixed nonzero real squeezing parameter \(s\).  Its magnitude \(|s|\) is the
squeezing strength, while its sign selects the squeezed quadrature.  A common
axis is chosen for convenience: fixed input squeezing phases can be absorbed
into the interferometer by right Haar invariance without changing the random
ensemble.  The input covariance is therefore
\begin{equation}
 \sigma_0=\diag(e^{2s}I_n,e^{-2s}I_n).
 \label{eq:input-covariance}
\end{equation}
This is the equal squeezing case of the covariance
\(Z\oplus Z^{-1}\) in App.~A.1 of Ref.~\cite{Iosue2023}, and it also follows
from Eqs.~(31) and (32) of Ref.~\cite{Weedbrook2012}; reversing the sign of
\(s\) only exchanges the squeezed quadrature.

A single input squeezed vacuum has mean photon number \(\sinh^2s\), so the
total input mean is \(n\sinh^2s\)~\cite{Weedbrook2012,Iosue2023}.  Passive
linear optics preserves total photon number.  The limit studied here therefore
keeps the mean photon density \(\sinh^2s\) fixed while the total mean photon
number grows linearly with \(n\); it is not a fixed total photon number limit.

For \(U=X+iY\in\U(n)\), the corresponding passive symplectic matrix is
\begin{equation}
 O_U=
 \begin{pmatrix}
 X&-Y\\Y&X
 \end{pmatrix},
 \qquad
 \sigma(U)=O_U\sigma_0O_U^{\mathsf T}.
 \label{eq:passive-map}
\end{equation}
The covariance transformation \(\sigma\mapsto O_U\sigma O_U^{\mathsf T}\)
is Eq.~(25) of Ref.~\cite{Weedbrook2012}.  It is also the passive
interferometer construction in Eqs.~(A1)--(A4) of Ref.~\cite{Iosue2023}.
Their realification \(\eta(U)\) equals our \(O_{\overline U}\); since complex
conjugation preserves normalized Haar measure on \(\U(n)\), the two sign
conventions define the same random ensemble.

The word ``typically'' is essential: for example, a real orthogonal \(U\)
commutes with the aligned input covariance through its realification, giving
\(\sigma(U)=\sigma_0\) and hence no modal entanglement.

The input is pure and passive Gaussian evolution preserves purity.  By output
permutation invariance of Haar measure, choosing the first \(k=k_n\) modes
entails no loss of generality.  We write \(\rho_{k,n}(U)\) for the partial
trace over the remaining \(n-k\) modes, exactly as in App.~A.1 of
Ref.~\cite{Iosue2023}; thus \(\rho_{k,n}(U)\) is the reduced density operator
of the first \(k\) output modes.  Its von Neumann entropy is
\begin{equation}
 S_{1,n}(U)=-\tr\!\left[\rho_{k,n}(U)\log\rho_{k,n}(U)\right].
 \label{eq:vn-definition}
\end{equation}
This is the standard trace definition; see Eq.~(26) of
Ref.~\cite{Weedbrook2012}.  Because the global output is pure,
\(S_{1,n}(U)\) is also the entanglement entropy across the chosen bipartition.
This interpretation relies on the ideal lossless preparation.  With loss, the
global optical state is generally mixed, and a reduced von Neumann entropy
contains both entanglement and ordinary mixedness rather than defining a mixed
state entanglement measure by itself.
Natural logarithms are used throughout, and \(\mathbb N_0=\{0,1,2,\ldots\}\).  We use the unnormalized Hilbert Schmidt norm
\(
 \|A\|_{\HS}^{2}=\tr(AA^*)=\sum_{a,b}|A_{ab}|^2
\), where \(A^*\) denotes the conjugate transpose.  For a real valued function
\(f\) on a metric space, \(\Lip(f)\) denotes its optimal global Lipschitz
constant.  Thus any displayed \(L\) satisfying
\(|f(x)-f(y)|\le Ld(x,y)\) gives \(\Lip(f)\le L\).  The notation
\(O_s(\cdot)\) means that the implicit constant may depend on the fixed
squeezing \(s\), but not on \(n\).  For deterministic \(a_n>0\),
\(X_n=O_{\Pp}(a_n)\) means that \((X_n/a_n)\) is bounded in probability,
equivalently that its laws form a tight family on \(\R\), in the terminology
of Sec.~2.2 of Ref.~\cite{vanDerVaart1998}.

\subsection{Weak and strong typicality}

Let \(m_n=\E S_{1,n}\), with expectation over normalized Haar measure on
\(\U(n)\).  Suppose that \(m_n>0\) for all sufficiently large \(n\).  The
entropy is \emph{weakly typical} if
\begin{equation}
 \frac{S_{1,n}}{m_n}\xrightarrow{\Pp}1.
 \label{eq:weak-typicality-definition}
\end{equation}
Following Definition~5 of Iosue \emph{et al.}~\cite{Iosue2023}, it is
\emph{strongly typical} if \(S_{1,n}-m_n\to0\) in probability.  Thus ``strong''
refers here to vanishing additive deviation.  Since \(m_n\) grows linearly for
a proportional bipartition, strong typicality implies weak typicality, but the
converse need not hold.  A \emph{coupling} of the sequence means random
variables \((U_n)\) on one probability space with the prescribed normalized
Haar marginal on \(\U(n)\) for every \(n\); no independence is included in
this definition.

\subsection{Main theorem}

\begin{theorem}[Von Neumann weak typicality]
\label{thm:main}
Fix \(s\in\R\setminus\{0\}\).  For every \(n\ge2\), let
\(k_n\in\{1,\ldots,n-1\}\) satisfy
\begin{equation}
 \frac{k_n}{n}\longrightarrow r\in(0,1).
 \label{eq:proportional-regime}
\end{equation}
For every \(n\ge2\), let \(U_n\) have normalized Haar distribution on \(\U(n)\).
Then
\begin{equation}
 \frac{S_{1,n}(U_n)}{\E S_{1,n}(U_n)}
 \xrightarrow{\Pp}1.
 \label{eq:main-convergence}
\end{equation}
More quantitatively, there exists \(c_{s,r}>0\) such that, for every
\(\varepsilon>0\), the following holds for all sufficiently large \(n\)
(with the threshold in \(n\) allowed to depend on \(\varepsilon\)):
\begin{equation}
 \Pp\!\left\{
 \left|\frac{S_{1,n}}{\E S_{1,n}}-1\right|\ge\varepsilon
 \right\}
 \le
 2\exp\!\left[-c_{s,r}
 \frac{\varepsilon^2n^2}{\log^2(en)}\right].
 \label{eq:main-tail}
\end{equation}
If \((U_n)\) is defined on a common probability space and each \(U_n\)
retains its normalized Haar marginal on \(\U(n)\), then the convergence in
Eq.~\eqref{eq:main-convergence} also holds almost surely.  No independence
across \(n\) is required.
\end{theorem}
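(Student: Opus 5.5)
We follow the pipeline of Figure~\ref{fig:model}. The proof has five steps: a spectral representation of the entropy, an endpoint regularization, Haar concentration for the regularized statistic, a linear lower bound on the mean, and the final assembly.

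\emph{Step 1: spectral representation.} Set \(Q=UU^{\mathsf T}\), which is complex symmetric and unitary, and let \(B_U\) be its \(k\times k\) principal block with singular values \(\tau_1,\dots,\tau_k\in[0,1]\). Write \(\sigma(U)=O_U\sigma_0O_U^{\mathsf T}\) as \(\cosh(2s)I+\sinh(2s)\,M\), where \(M\) is the realification of \(Q\) with a conjugation sign. Restricting to the first \(k\) modes, the reduced symplectic eigenvalues are the singular values of \(\cosh(2s)I+\sinh(2s)\,\mathrm{diag}\)-type combinations, which give
\[
\nu_j^2=\cosh^2(2s)-\sinh^2(2s)\,\tau_j^2,\qquad \text{equivalently}\qquad \nu_j^2=1+\sinh^2(2s)(1-\tau_j^2).
\]
The entropy is then \(S_{1,n}=\sum_j g(\nu_j)\), with \(g(\nu)=\tfrac{\nu+1}{2}\log\tfrac{\nu+1}{2}-\tfrac{\nu-1}{2}\log\tfrac{\nu-1}{2}\). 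So \(S_{1,n}=\sum_j h(\tau_j)\) for a bounded function \(h\) on \([0,1]\) with \(h(1)=0\). This \(h\) is smooth on \([0,1)\), but near \(\tau=1\) it behaves like \(x\log(1/x)\) with \(x\asymp 1-\tau^2\), so its derivative diverges logarithmically.

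\emph{Step 2: endpoint regularization.} Define \(h_\eta\) by replacing \(h\) on \([1-\eta,1]\) with a linear (or constant) interpolation. Then \(\Lip(h_\eta)\lesssim_s\log(e/\eta)\) and \(\|h-h_\eta\|_\infty\lesssim_s \eta\log(e/\eta)\). Set \(F_{n,\eta}(U)=\sum_j h_\eta(\tau_j(U))\). By Hoffman--Wielandt for singular values, \(\sum_j|\tau_j(B)-\tau_j(B')|\le\sqrt{k}\,\|B-B'\|_{\HS}\). Moreover \(\|UU^{\mathsf T}-VV^{\mathsf T}\|_{\HS}\le2\|U-V\|_{\HS}\). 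Together these give \(\Lip(F_{n,\eta})\le 2\sqrt{n}\,\Lip(h_\eta)\) with respect to the Hilbert--Schmidt metric.

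\emph{Step 3: Haar concentration.} Apply concentration on \(\U(n)\), e.g.\ Meckes--Meckes via the \(\mathrm{SU}(n)\) fibration with Ricci curvature of order \(n\). This gives \(\Pp(|F-\E F|\ge t)\le2\exp[-cnt^2/\Lip(F)^2]\le 2\exp[-ct^2/\log^2(e/\eta)]\). Choose \(\eta=1/n\). Then \(|S_{1,n}-F_{n,\eta}|\le k\cdot O_s(\log n/n)=O_s(\log n)\) deterministically, and the same bound holds for the means.

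\emph{Step 4: linear lower bound on the mean.} We need \(\E S_{1,n}\ge c_{s,r}n\). One route is the mean Page curve of Youm \emph{et al.}, which is linear in \(n\) at fixed ratio \(r\). A self-contained alternative: since \(h\ge0\) and \(h(\tau)\ge c_s(1-\tau^2)\), it suffices to bound \(\E\tr(I-B_UB_U^*)\) from below. Writing \(B=(UU^{\mathsf T})_{[k]}\), we have \(\tr(BB^*)=\sum_{a,b\le k}|Q_{ab}|^2\). Its Haar expectation is about \(k^2\cdot 2/n\), so \(\E\tr(I-BB^*)\ge k(1-2k/n)-o(n)\). This is positive only when \(r<1/2\). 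For \(r\ge 1/2\), use symmetry \(S_k=S_{n-k}\), since the global state is pure. This gives \(\E S_{1,n}\ge c_{s,r}n\) for all \(r\in(0,1)\).

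\emph{Step 5: assembly.} Take \(t=\varepsilon\E S_{1,n}/2\ge c\varepsilon n\). The \(O_s(\log n)\) regularization error is below \(\varepsilon\E S_{1,n}/2\) once \(n\) is large, depending on \(\varepsilon\). This yields Eq.~\eqref{eq:main-tail} with exponent \(c\varepsilon^2n^2/\log^2(en)\). The right-hand side is summable in \(n\), so Borel--Cantelli, applied marginal by marginal, gives almost sure convergence for any coupling without independence.

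The main obstacle is Step 1: the exact algebraic identification of the reduced symplectic spectrum with the singular values of \(B_U\), including the correct sign and conjugation conventions.
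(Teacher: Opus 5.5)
Your overall architecture matches the paper's: the principal block \(B_U\) of \(UU^{\mathsf T}\), a cutoff within \(\eta\) of \(t=1\), Mirsky's inequality combined with \(\|UU^{\mathsf T}-VV^{\mathsf T}\|_{\HS}\le2\|U-V\|_{\HS}\), Haar concentration, and a linear mean bound through \(\sum_j(1-\tau_j^2)\). Your choice \(\eta=1/n\) in place of the paper's \(\eta_n=n^{-2}\) is harmless. The cutoff error \(O_s(\log n)\) is still \(o(n)\), and \(\log(e/\eta)=\log(en)\) gives exactly the exponent in Eq.~\eqref{eq:main-tail}.

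The genuine gap is in Step~4 at the balanced ratio \(r=\tfrac12\), which the theorem includes. Your estimate \(\E\tr(B_UB_U^*)\approx 2k^2/n\) assigns every entry of the block the diagonal second moment. With it, your bound \(k(1-2k/n)\) is only \(o(n)\), and possibly negative, when \(k_n/n\to\tfrac12\). The purity symmetry \(S_k=S_{n-k}\) cannot rescue this, because the complementary cut also has ratio \(\tfrac12\). So at \(r=\tfrac12\) your argument never shows that the threshold \(t=\varepsilon\E S_{1,n}/2\) is of order \(\varepsilon n\). That is exactly what is needed both for the \(n^2/\log^2(en)\) exponent and for absorbing the \(O_s(\log n)\) cutoff error.

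The repair is to compute the COE second moments exactly: \(\E|(Q_U)_{ab}|^2=(1+\delta_{ab})/(n+1)\). Only the \(k\) diagonal entries carry \(2/(n+1)\), so \(\E\|B_U\|_{\HS}^2=k(k+1)/(n+1)\) and \(\E\sum_j(1-\tau_j^2)=k(n-k)/(n+1)\). This is extensive for every \(r\in(0,1)\), and the symmetry detour becomes unnecessary. The paper obtains these moments from permutation and diagonal phase congruences, a real \(45^\circ\) rotation giving \(a=2b\), and the row normalization \(a+(n-1)b=1\).

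Your pointwise bound \(h(\tau)\ge c_s(1-\tau^2)\) is true but needs a justification. One option is the one mode inequality \(g(\nu)\ge\log\nu\), which is \(S_1\ge S_2\) for a single mode, combined with \(\log(1+A_s^2x)\ge\tanh^2(2s)\,x\) on \([0,1]\). This gives \(c_s=\tfrac12\tanh^2(2s)\).

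Your fallback to the mean Page curve of Youm et al.\ does not close the gap either. That result is stated only for \(k=rn\) along dimensions where \(rn\) is integral, not for general \(k_n/n\to r\). Moreover, positivity of \(\mathcal P_1(s,r)\) is not evident from the hypergeometric series; the paper in fact derives \(\mathcal P_1(s,r)\ge\gamma_{s,r}\) from the very mean bound you are missing.

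Finally, for the step you flag as the main obstacle, the clean route is the anticommutation \(\Omega_kM_U=-M_U\Omega_k\), which follows from \(B_U^{\mathsf T}=B_U\). It gives \((i\Omega_k\sigma_{k,n})^2=B_s^2I_{2k}-A_s^2M_U^2\), where \(M_U^2\) is the realification of \(B_UB_U^*\).
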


For every allowed \(n\), Eq.~\eqref{eq:mean-lower-bound} below and \(s\ne0\)
give \(\E S_{1,n}>0\), so every normalized ratio in the theorem is defined.

The theorem applies to fixed nonzero equal squeezing, Haar random passive
interferometers, and proportional bipartitions.  It does not address unequal
squeezing, squeezing that vanishes with \(n\), finite depth optical circuits,
loss, or sampling hardness.  Appendix~\ref{app:no-strong} separately proves
that strong typicality fails when \(r\ne\tfrac12\).

\priorlandscapetable

The extensive mean bound proved below converts weak typicality into a direct
volume law statement.  Define
\(
 \gamma_{s,r}=\frac12\tanh^2(2s)r(1-r)>0
\).

\begin{corollary}[Typical extensive entanglement]
\label{cor:volume-law}
Under the assumptions of Theorem~\ref{thm:main}, let
\(0<v<\gamma_{s,r}\), and let \(h\) be the one mode entropy function in
Eq.~\eqref{eq:h-definition}.  Then
\[
 \Pp\!\left\{
 v n\le S_{1,n}(U_n)
 \le k_n\,h\!\left(\cosh(2s)\right)
 \right\}\longrightarrow1.
\]
For every coupling with the prescribed normalized Haar marginals,
\(S_{1,n}(U_n)\ge v n\) for all sufficiently large \(n\), almost surely.
The upper bound is deterministic: Eq.~\eqref{eq:symplectic-eigenvalues} gives
\(1\le\nu_j\le\cosh(2s)\), and monotonicity of \(h\) gives the displayed
estimate.  Hence
\(S_{1,n}=\Theta_{\Pp}(n)\) for a fixed proportional bipartition.
\end{corollary}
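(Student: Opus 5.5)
The corollary follows from Theorem~\ref{thm:main} together with a linear lower bound on the mean, Eq.~\eqref{eq:mean-lower-bound}. That bound states \(\E S_{1,n}\ge \gamma'_n n\) with \(\gamma'_n\to\gamma_{s,r}\), or at least \(\liminf_n \E S_{1,n}/n\ge\gamma_{s,r}\). I would derive it from the spectral representation.

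First I need \(\sigma(U)\). Since \(\sigma_0=\cosh(2s)I_{2n}+\sinh(2s)\diag(I_n,-I_n)\), the reduced covariance of the first \(k\) modes is \(\cosh(2s)I\) plus \(\sinh(2s)\) times the realification of the principal \(k\times k\) block \(B_U\) of \(UU^{\mathsf T}\). Its symplectic eigenvalues are \(\nu_j=\sqrt{\cosh^2(2s)-\sinh^2(2s)\tau_j^2}\), which is Eq.~\eqref{eq:symplectic-eigenvalues}, with \(0\le\tau_j\le1\). The entropy is then \(\sum_j h(\nu_j)\).

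To bound the mean, I would use a convexity or chord bound for \(h\) as a function of \(x=1-\tau_j^2\in[0,1]\). One has \(\nu^2-1=\sinh^2(2s)x\), and \(h(\nu)\) is bounded below by a multiple of \(\nu^2-1\) on \([1,\cosh 2s]\), or more precisely by \(\tfrac12(\nu^2-1)/\cosh^2(2s)\)-type estimates. This gives
\[
 S_{1,n}\ge c\sum_j(1-\tau_j^2)=c\bigl(k-\|B_U\|_{\HS}^2\bigr).
\]
The Haar moment \(\E|(UU^{\mathsf T})_{ab}|^2=\frac{1+\delta_{ab}}{n+1}\) then gives
\[
 \E\|B_U\|_{\HS}^2=\frac{k(k+1)}{n+1},
 \qquad\text{so}\qquad
 k-\E\|B_U\|_{\HS}^2=\frac{k(n-k)}{n+1}.
\]
Dividing by \(n\) yields \(r(1-r)\) in the limit. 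The constant \(c\) must be matched to \(\tfrac12\tanh^2(2s)\), and this is the step I expect to need the most care, though it is a one-variable calculus inequality. Any explicit lower bound on \(h\) of the form \(h(\nu)\ge\tfrac12\tanh^2(2s)(1-\tau^2)\) suffices; such a bound follows from \(h(\nu)\ge(\nu^2-1)/(2\nu^2)\) or a similar elementary inequality.

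With \(\E S_{1,n}\ge(\gamma_{s,r}-o(1))n\), fix \(v<\gamma_{s,r}\) and choose \(\varepsilon>0\) with \((1-\varepsilon)\gamma_{s,r}>v\). For large \(n\), on the event \(|S_{1,n}/\E S_{1,n}-1|<\varepsilon\) we have \(S_{1,n}\ge(1-\varepsilon)\E S_{1,n}\ge vn\). By Eq.~\eqref{eq:main-tail} the complement has probability at most \(2\exp[-c\varepsilon^2n^2/\log^2(en)]\to0\).

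The upper bound is deterministic: \(\tau_j\ge0\) gives \(\nu_j\le\cosh(2s)\), and \(h\) is increasing, so \(S_{1,n}\le k_nh(\cosh 2s)\) always.

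For the almost-sure statement under an arbitrary coupling, the tail bound is summable in \(n\). By Borel--Cantelli, which needs no independence, only finitely many of the events \(\{S_{1,n}<vn\}\) occur almost surely. Finally, \(\Theta_{\Pp}(n)\) follows because \(k_n\le n\) in the upper bound and the lower bound holds with probability tending to one.
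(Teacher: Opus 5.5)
Your proposal is correct and follows the paper's route: the extensive mean bound \(\E S_{1,n}\ge\frac12\tanh^2(2s)\,k(n-k)/(n+1)\) from the COE second moment, then the tail bound of Theorem~\ref{thm:main} plus Borel--Cantelli, and monotonicity of \(h\) for the deterministic upper bound. The pointwise inequality \(h(\nu)\ge(\nu^2-1)/(2\nu^2)\), which you invoke without proof, does hold: \(h(\nu)\ge\log\nu\) is the one mode form of the paper's step \(S_{1,n}\ge S_{2,n}\), and combining it with \(\log y\ge1-1/y\) and \(\nu^2\le\cosh^2(2s)\) gives exactly the constant \(\frac12\tanh^2(2s)\).
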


Purity sharpens the deterministic bound to
\[
 S_{1,n}(U_n)
 \le\min\{k_n,n-k_n\}\,h\!\left(\cosh(2s)\right)=O_s(n).
\]
{\color{black}
The first estimate follows from the \(k_n\) mode reduction.  Applying the same
estimate to the complementary \((n-k_n)\) mode reduction and using purity gives
the stated minimum.
\par}

The known average Page curve can now be promoted to a typical sample
statement.  Let \(\mathcal P_1(s,r)\) denote the limiting von Neumann
Page curve density calculated by Youm \emph{et al.} in Theorem~1 and the
discussion following their Eq.~(4)~\cite{Youm2025}.  For their fixed ratio
scaling \(k=rn\), along dimensions for which \(rn\) is integral, that published
result states
\begin{equation}
 \frac1n\E S_{1,n}\longrightarrow\mathcal P_1(s,r).
 \label{eq:known-page-curve}
\end{equation}

\begin{corollary}[Typical Page curve value]
\label{cor:page}
Specialize Theorem~\ref{thm:main} to the fixed ratio subsystem scaling of
Theorem~\ref{thm:published-page-mean}, for which the published result
Eq.~\eqref{eq:known-page-curve} holds.  Then
\begin{equation}
 \frac1n S_{1,n}(U_n)
 \xrightarrow{\Pp}\mathcal P_1(s,r),
 \label{eq:typical-page-curve}
\end{equation}
and the same convergence holds almost surely for every coupling with the
prescribed normalized Haar marginals.  In addition,
\(\mathcal P_1(s,r)\ge\gamma_{s,r}>0\).
\end{corollary}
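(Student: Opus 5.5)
The plan is to factor the normalized entropy as
\[
 \frac1n S_{1,n}(U_n)=\frac{S_{1,n}(U_n)}{\E S_{1,n}(U_n)}\cdot\frac{\E S_{1,n}(U_n)}{n}
\]
and treat the two factors separately. The first factor is random and is controlled by Theorem~\ref{thm:main}. The second is deterministic and converges to \(\mathcal P_1(s,r)\) by the published mean result, Eq.~\eqref{eq:known-page-curve}. A product of a sequence converging in probability (or almost surely) and a convergent deterministic sequence converges in the same mode to the product of the limits. This gives Eq.~\eqref{eq:typical-page-curve}.

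\emph{Matching the index sets.} The published result holds only along dimensions \(n\) with \(rn\) integral and \(k_n=rn\), whereas Theorem~\ref{thm:main} is stated for a sequence indexed by every \(n\ge2\). I will complete the published subsequence to a full sequence, for instance by setting \(k_n=\min\{\max\{\lfloor rn\rfloor,1\},n-1\}\) off the subsequence. This full sequence satisfies Eq.~\eqref{eq:proportional-regime}, so Theorem~\ref{thm:main} applies to it. Restricting the resulting convergence back to the subsequence preserves both convergence in probability and almost sure convergence. The same restriction works for every coupling, since the almost sure clause of Theorem~\ref{thm:main} requires only Haar marginals and no independence across \(n\).

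\emph{Almost sure statement.} This comes directly from the theorem's almost sure clause. If one wants to re-derive it, the tail bound \eqref{eq:main-tail} is summable in \(n\) for each fixed \(\varepsilon\). Borel--Cantelli then applies without independence, and intersecting the resulting events over rational \(\varepsilon\) gives almost sure convergence of the first factor.

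\emph{Lower bound \(\mathcal P_1(s,r)\ge\gamma_{s,r}\).} This is the step needing care, because it must come from the extensive mean bound, Eq.~\eqref{eq:mean-lower-bound}, referenced earlier. I expect that bound to take the form
\[
 \E S_{1,n}\ge \tfrac12\tanh^2(2s)\,\frac{k_n(n-k_n)}{n}\,(1+o(1)),
\]
or an equivalent form whose normalization by \(n\) tends to \(\gamma_{s,r}=\tfrac12\tanh^2(2s)r(1-r)\). Dividing by \(n\) and passing to the limit along the published subsequence, using Eq.~\eqref{eq:known-page-curve} on the left, gives \(\mathcal P_1(s,r)\ge\gamma_{s,r}\). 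If the bound instead carries explicit \(n\)-dependent corrections, these must be checked to be \(o(n)\). Positivity of \(\gamma_{s,r}\) then follows from \(s\ne0\) and \(r\in(0,1)\).
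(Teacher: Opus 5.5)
Your proposal is correct and follows the same route the paper intends. You factor $\frac1n S_{1,n}$ into the ratio controlled by Theorem~\ref{thm:main} and the published mean limit, and you get $\mathcal P_1(s,r)\ge\gamma_{s,r}$ by dividing the paper's mean bound $\E S_{1,n}\ge\frac12\tanh^2(2s)\,k_n(n-k_n)/(n+1)$ by $n$ and passing to the limit along the published subsequence; that bound is exactly of your anticipated $(1+o(1))$ form. Your explicit completion of the integral-$rn$ subsequence to a full sequence $k_n$, which makes Theorem~\ref{thm:main} formally applicable, is a detail the paper leaves implicit, and you handle it correctly.
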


\section{Exact reduction to a singular value statistic}
\label{sec:spectral}

Let
\begin{equation}
 \begin{aligned}
 P_{k,n}&=\begin{pmatrix}I_k&0\end{pmatrix}\in\C^{k\times n},\\
 B_U&=P_{k,n}UU^{\mathsf T}P_{k,n}^{\mathsf T}.
 \end{aligned}
 \label{eq:B-definition}
\end{equation}
The matrix \(Q_U=UU^{\mathsf T}\) has the COE distribution and \(B_U\) is
its leading \(k\times k\) principal block~\cite{Meckes2019}.  Let
\begin{equation}
 1\ge\tau_1(U)\ge\cdots\ge\tau_k(U)\ge0
 \label{eq:tau-definition}
\end{equation}
be the singular values of \(B_U\).  The upper bound follows because \(B_U\) is a compression of a unitary matrix.

Following the equal squeezing notation of Ref.~\cite{Iosue2023}, set
\begin{equation}
 A_s=\sinh(2s),\qquad B_s=\cosh(2s).
 \label{eq:cd-definition}
\end{equation}
Thus the squeezing matrices denoted by \(A\) and \(B\) in that reference are
\(A=A_sI_n\) and \(B=B_sI_n\), respectively.  The subscript \(s\) prevents
confusion between the coefficient \(B_s\) and the principal COE block \(B_U\).
Restricting Eq.~\eqref{eq:passive-map} to the selected modes gives
\begin{equation}
 \begin{aligned}
 \sigma_{k,n}(U)&=B_sI_{2k}+A_sM_U,\\
 M_U&=
 \begin{pmatrix}
 \Re B_U&\Im B_U\\
 \Im B_U&-\Re B_U
 \end{pmatrix}.
 \end{aligned}
 \label{eq:reduced-covariance}
\end{equation}
Equation~\eqref{eq:reduced-covariance} specializes the covariance block formula
in Appendix~A, Eq.~(A4), of Ref.~\cite{Iosue2023}.  In Appendix~B of
Ref.~\cite{Youm2025}, the identity before Eq.~(B6) gives the squared symplectic
matrix and Eq.~(B6) gives the R\'{e}nyi entropy.  Neither source states
Eq.~\eqref{eq:symplectic-eigenvalues}.
Introduce the symplectic form
\begin{equation}
 \Omega_k=
 \begin{pmatrix}0&I_k\\-I_k&0\end{pmatrix}.
 \label{eq:symplectic-form}
\end{equation}
Because \(B_U\) is complex symmetric,
\begin{equation}
 \Omega_kM_U=-M_U\Omega_k,
 \label{eq:anticommutation}
\end{equation}
while \(M_U^2\) is the realification of \(B_UB_U^*\).  Its eigenvalues are therefore
\begin{equation}
 \tau_1(U)^2,\tau_1(U)^2,\ldots,
 \tau_k(U)^2,\tau_k(U)^2.
 \label{eq:M-spectrum}
\end{equation}
Equation~\eqref{eq:anticommutation} implies
\begin{equation}
 \bigl(i\Omega_k\sigma_{k,n}(U)\bigr)^2
 =B_s^2I_{2k}-A_s^2M_U^2.
 \label{eq:symplectic-square}
\end{equation}
The first clause of External Theorem~\ref{thm:external-gaussian} supplies the
normal form input underlying this spectral interpretation.  Its classical
ingredients are the
unitary congruence factorization for complex symmetric matrices due to
Takagi~\cite{Takagi1924} and Williamson's symplectic normal form for positive
real matrices~\cite{Williamson1936}; a modern Gaussian state treatment, with
the vacuum symplectic eigenvalue normalized to one, is given in
Sections~3.2.3 and 3.2.4 of Ref.~\cite{Serafini2017}.
{\color{black}We use these normal form results specialized to the covariance
family displayed above.}
The spectrum of \(M_U\) is contained in \([-1,1]\), and
\(B_s-|A_s|=e^{-2|s|}>0\); hence \(\sigma_{k,n}\) is positive definite.  Therefore
\(i\Omega_k\sigma_{k,n}\) is similar to the Hermitian matrix
\(i\sigma_{k,n}^{1/2}\Omega_k\sigma_{k,n}^{1/2}\), so it is diagonalizable
with real spectrum.  Standard symplectic spectral pairing gives the pairs
\(\pm\nu_j\).  Mathematically, Eq.~\eqref{eq:symplectic-eigenvalues} follows
from Eqs.~\eqref{eq:anticommutation} and \eqref{eq:symplectic-square} and the
spectral definition of Williamson eigenvalues.  The positive symplectic
eigenvalues are
\begin{equation}
 \begin{aligned}
 \nu_j(U)
 &=\sqrt{B_s^2-A_s^2\tau_j(U)^2}\\
 &=\sqrt{1+A_s^2\bigl(1-\tau_j(U)^2\bigr)}.
 \end{aligned}
 \label{eq:symplectic-eigenvalues}
\end{equation}

For \(\nu\ge1\), define the one mode entropy function
\begin{equation}
 h(\nu)=
 \frac{\nu+1}{2}\log\frac{\nu+1}{2}
 -\frac{\nu-1}{2}\log\frac{\nu-1}{2},
 \label{eq:h-definition}
\end{equation}
with \(0\log0=0\).  The entropy of a bosonic Gaussian state is the sum of \(h\) over its symplectic eigenvalues~\cite{Weedbrook2012,Serafini2017}.
{\color{black}A direct derivation is as follows.}  For a Gaussian density operator
on \(k\) modes with finite second moments, the uncertainty relation gives
\(\nu_j\ge1\), and Williamson's decomposition is implemented by a Gaussian
unitary.  For \(\bar n\ge0\), define the one mode thermal density operator in
the Fock basis by
\[
 \rho_{\mathrm{th}}(\bar n)
 :=\sum_{m=0}^{\infty}
 \frac{1}{\bar n+1}
 \left(\frac{\bar n}{\bar n+1}\right)^m
 |m\rangle\!\langle m|.
\]
It has mean photon number \(\bar n\).  Displacement and the Williamson
Gaussian unitary do not change entropy, so the original state is isospectral
to
\[
 \bigotimes_{j=1}^{k}\rho_{\mathrm{th}}(\bar n_j),
 \qquad \bar n_j=\frac{\nu_j-1}{2}.
\]
Thus its one mode eigenvalues are
\[
 p_{j,m}=\frac{1}{\bar n_j+1}
 \left(\frac{\bar n_j}{\bar n_j+1}\right)^m,
 \qquad m\in\mathbb N_0.
\]
They are nonnegative and sum to one.  For \(\bar n_j>0\), summing the
geometric series and its first moment shows that their entropy is
\[
 -\sum_{m=0}^{\infty}p_{j,m}\log p_{j,m}=h(\nu_j).
\]
Indeed, both sides equal
\((\bar n_j+1)\log(\bar n_j+1)-\bar n_j\log\bar n_j\).
At \(\bar n_j=0\), the spectrum is \((1,0,0,\ldots)\) and both sides vanish under the convention \(0\log0=0\).  Finite tensor product additivity and unitary invariance then give \(S(\rho)=\sum_j h(\nu_j)\).  The one mode geometric spectrum and its entropy appear in Eqs.~(16) through (18) of Ref.~\cite{HolevoSohmaHirota1999}, while its Eqs.~(41) through (44) give the general Gaussian product normal form and entropy.  In the vacuum eigenvalue one convention used here, Eqs.~(26) and (27) of Ref.~\cite{Weedbrook2012} give the one mode thermal spectrum, Eqs.~(44) and (45) give the Williamson form and \(\nu_j\ge1\), and Eqs.~(46) through (49) give the entropy and Gaussian unitary thermal decomposition.

Define
\begin{equation}
 \phi_s(t)=h\!\left(\sqrt{1+A_s^2(1-t^2)}\right),
 \qquad 0\le t\le1.
 \label{eq:phi-definition}
\end{equation}
We have proved the following exact identity at finite \(n\):
\begin{equation}
 \boxed{
 S_{1,n}(U)=\sum_{j=1}^{k}\phi_s\bigl(\tau_j(U)\bigr).}
 \label{eq:entropy-spectral-statistic}
\end{equation}

This reduction isolates the endpoint obstruction to a uniform scalar
Lipschitz bound in the behavior of \(\phi_s\) at \(t=1\), corresponding to a
a Williamson symplectic eigenvalue approaching its pure value \(\nu=1\).
The proof does not differentiate ordered
singular values at degeneracies; it uses Mirsky's inequality below.

\section{Regularizing the endpoint singularity}
\label{sec:regularization}

Let
\begin{equation}
 \nu(t)=\sqrt{1+A_s^2(1-t^2)}.
 \label{eq:nu-of-t}
\end{equation}
For \(0\le t<1\), direct differentiation yields
\begin{equation}
 h'(\nu)=\frac12\log\frac{\nu+1}{\nu-1},
 \qquad
 \nu'(t)=-\frac{A_s^2t}{\nu(t)}.
 \label{eq:scalar-derivatives}
\end{equation}
Moreover,
\begin{equation}
 \nu(t)-1=\frac{A_s^2(1-t^2)}{\nu(t)+1},
 \label{eq:nu-minus-one}
\end{equation}
so that, using \(\nu(t)\le B_s\),
\begin{equation}
 \frac{\nu(t)+1}{\nu(t)-1}
 =\frac{(\nu(t)+1)^2}{A_s^2(1-t^2)}
 \le \frac{(B_s+1)^2}{A_s^2(1-t)}.
 \label{eq:log-ratio-bound}
\end{equation}
Since \(s\ne0\), the constant on the right is finite.  Define the continuous
endpoint modulus
\[
 g(x)=
 \begin{cases}
  x\log(e/x),&x>0,\\
  0,&x=0.
 \end{cases}
\]
Equations~\eqref{eq:scalar-derivatives} and \eqref{eq:log-ratio-bound} prove
the following endpoint estimate.

\begin{lemma}[Logarithmic endpoint]
\label{lem:endpoint}
For fixed \(s\ne0\), there is \(C_s>0\) such that
\begin{align}
 |\phi_s'(t)|
 &\le C_s\log\frac{e}{1-t},
 &&0\le t<1,
 \label{eq:phi-derivative-bound}\\
 0\le\phi_s(t)
 &\le C_s g(1-t),
 &&0\le t\le1.
 \label{eq:phi-value-bound}
\end{align}
\end{lemma}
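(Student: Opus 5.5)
The plan is to differentiate \(\phi_s=h\circ\nu\) by the chain rule and then bound the two factors separately. For \(0\le t<1\) we have \(\nu(t)>1\), so Eq.~\eqref{eq:scalar-derivatives} gives
\[
 \phi_s'(t)=h'(\nu(t))\,\nu'(t)
 =-\frac{A_s^2t}{2\nu(t)}\log\frac{\nu(t)+1}{\nu(t)-1}.
\]
The prefactor satisfies \(|A_s^2t/(2\nu(t))|\le A_s^2/2\), because \(\nu\ge1\) and \(t\le1\).

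For the logarithm, Eq.~\eqref{eq:log-ratio-bound} gives
\[
 \log\frac{\nu+1}{\nu-1}\le \log\frac{(B_s+1)^2}{A_s^2}+\log\frac{1}{1-t}.
\]
Set \(K_s=\max\{0,\log((B_s+1)^2/A_s^2)\}\); in fact \((B_s+1)^2\ge A_s^2\), so the maximum is harmless. Since \(\log\frac{e}{1-t}=1+\log\frac1{1-t}\ge1\), the right side is at most \((K_s+1)\log\frac{e}{1-t}\). This yields Eq.~\eqref{eq:phi-derivative-bound} with \(C_s\ge \tfrac12A_s^2(K_s+1)\).

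For Eq.~\eqref{eq:phi-value-bound}, nonnegativity comes from \(h\ge0\) on \([1,\infty)\): \(h\) is an entropy by the thermal-state computation, or alternatively \(h(1)=0\) and \(h'>0\). At \(t=1\) we have \(\nu=1\), so \(\phi_s(1)=0=C_sg(0)\). For \(t<1\), \(\phi_s\) is continuous on \([0,1]\) and \(C^1\) on \([0,1)\), so
\[
 \phi_s(t)=-\int_t^1\phi_s'(u)\,du\le C_s\int_t^1\log\frac{e}{1-u}\,du .
\]
The integral is improper at \(u=1\) but converges. Substituting \(x=1-u\) gives
\[
 \int_0^{1-t}\log\frac{e}{x}\,dx=\Bigl[x\log\frac{e}{x}+x\Bigr]_0^{1-t}=(1-t)\log\frac{e}{1-t}+(1-t).
\]
This is at most \(2g(1-t)\), because \(\log(e/x)\ge1\) for \(x\le1\). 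Enlarging \(C_s\) by a factor of 2 and taking the maximum of the two constants gives both inequalities with a single \(C_s\).

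The only delicate step is justifying the fundamental theorem of calculus up to the singular endpoint. To handle it, we apply the theorem on \([t,1-\delta]\), where everything is smooth, and let \(\delta\to0\). The left side converges by continuity of \(\phi_s\), which holds because \(h\) is continuous at \(1\) under the convention \(0\log0=0\). The right side converges by monotone convergence, since the integrand bound is integrable.
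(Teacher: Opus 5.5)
Your proof is correct and follows the paper's route. Both apply the chain rule with Eqs.~\eqref{eq:scalar-derivatives} and \eqref{eq:log-ratio-bound} to get the derivative bound, then integrate it from \(t\) to \(1\) using \(\phi_s(1)=0\), with an enlarged \(C_s\), to get the value bound. Your explicit constants and the \(\delta\to0\) limiting argument at the singular endpoint fill in details the paper leaves implicit.
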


After increasing \(C_s\) if necessary, the second inequality follows by
integrating the first and using \(\phi_s(1)=0\); the same enlarged constant is
used in both displays.  For \(0<\eta<1/2\), define
\begin{equation}
 \phi_{s,\eta}(t)=\phi_s\bigl(\min\{t,1-\eta\}\bigr),
 \qquad 0\le t\le1,
 \label{eq:regularized-phi}
\end{equation}
and the regularized entropy statistic
\begin{equation}
 F_{n,\eta}(U)
 =\sum_{j=1}^{k}\phi_{s,\eta}\bigl(\tau_j(U)\bigr).
 \label{eq:regularized-statistic}
\end{equation}
Because \(g(\eta)=\eta\log(e/\eta)\) for \(\eta>0\),
Lemma~\ref{lem:endpoint} gives
\begin{equation}
 \Lip(\phi_{s,\eta})
 \le C_s\log\frac e\eta
 \label{eq:scalar-lipschitz}
\end{equation}
and the deterministic approximation
\begin{equation}
 \sup_{U\in\U(n)}
 |S_{1,n}(U)-F_{n,\eta}(U)|
 \le C_sk\eta\log\frac e\eta.
 \label{eq:uniform-approximation}
\end{equation}

It remains to control the sensitivity to \(U\).  For \(U,V\in\U(n)\), compression and unitary invariance of the Hilbert Schmidt norm give
\begin{align}
 \|B_U-B_V\|_{\HS}
 &\le\|UU^{\mathsf T}-VV^{\mathsf T}\|_{\HS}\nonumber\\
 &\le2\|U-V\|_{\HS}.
 \label{eq:B-lipschitz}
\end{align}
External Theorem~\ref{thm:external-mirsky}, the Frobenius specialization of
Mirsky's singular value variation theorem~\cite{Mirsky1960} (see also the treatment of symmetric gauges and
unitarily invariant norms in Chapter~IV of Ref.~\cite{Bhatia1997}) states
that, for arbitrary complex matrices of the same size and with both complete
singular value lists in the same nonincreasing order,
\begin{equation}
 \sum_{j=1}^{k}
 |\tau_j(U)-\tau_j(V)|^2
 \le\|B_U-B_V\|_{\HS}^2.
 \label{eq:mirsky}
\end{equation}
Combining Eqs.~\eqref{eq:scalar-lipschitz} and \eqref{eq:mirsky} with the Cauchy Schwarz inequality yields
\begin{equation}
 |F_{n,\eta}(U)-F_{n,\eta}(V)|
 \le L_{n,\eta}\|U-V\|_{\HS},
 \label{eq:statistic-lipschitz}
\end{equation}
where
\begin{equation}
 L_{n,\eta}
 \le2C_s\sqrt{k}\log\frac e\eta.
 \label{eq:L-bound}
\end{equation}

\section{An extensive lower bound for the mean}
\label{sec:mean}

The concentration scale must be compared with the mean entropy.  We derive the needed linear lower bound without using the explicit Page curve formula.

Define the R\'{e}nyi entropy of order 2 by
\[
 S_{2,n}(U):=-\log\tr\!\left[\rho_{k,n}(U)^2\right].
\]
Monotonicity of R\'{e}nyi entropies in their order gives
\(S_{1,n}(U)\ge S_{2,n}(U)\).  From
Eq.~\eqref{eq:symplectic-eigenvalues},
\begin{equation}
 \begin{aligned}
 S_{1,n}(U)&\ge S_{2,n}(U),\\
 S_{2,n}(U)&=\frac12\sum_{j=1}^{k}
 \log\!\left[1+A_s^2(1-\tau_j(U)^2)\right].
 \end{aligned}
 \label{eq:S1-ge-S2}
\end{equation}
For \(0\le x\le1\),
\begin{equation}
 \log(1+A_s^2x)\ge\frac{A_s^2}{1+A_s^2}x.
 \label{eq:log-chord}
\end{equation}
Indeed, the derivative of the left hand side is
\(A_s^2/(1+A_s^2x)\ge A_s^2/(1+A_s^2)\) on this interval, and both sides vanish at
\(x=0\).
Consequently,
\begin{equation}
 S_{1,n}(U)
 \ge\frac12\tanh^2(2s)
 \sum_{j=1}^{k}\bigl(1-\tau_j(U)^2\bigr).
 \label{eq:entropy-energy-lower}
\end{equation}

The elementary COE second moment is
\begin{equation}
 \E |(Q_U)_{ab}|^2
 =\frac{1+\delta_{ab}}{n+1},
 \qquad Q_U=UU^{\mathsf T}.
 \label{eq:coe-second-moment}
\end{equation}
{\color{black}A direct derivation is as follows.}  For \(n\ge2\), permutation congruences reduce the second
moments to two numbers
\(a=\E|Q_{11}|^2\) and \(b=\E|Q_{12}|^2\), where we write
\(Q=Q_U\) for this calculation.  Diagonal phase congruences make
the mixed expectations among \(Q_{11},Q_{12},Q_{22}\) vanish.  Congruence by
the real $45^\circ$ rotation in the first two coordinates therefore gives
\[
 a=\frac14\E|Q_{11}+2Q_{12}+Q_{22}|^2
   =\frac14(2a+4b)=\frac a2+b,
\]
so $a=2b$.  Since every row of $Q$ is a unit vector,
$a+(n-1)b=1$, whence $b=1/(n+1)$ and $a=2/(n+1)$.  The case $n=1$
is immediate.  This proves Eq.~\eqref{eq:coe-second-moment} without a
Weingarten calculation or an irreducibility assumption.

Summing Eq.~\eqref{eq:coe-second-moment} over the leading block gives
\begin{equation}
 \E\sum_{j=1}^{k}\tau_j(U)^2
 =\E\|B_U\|_{\HS}^2
 =\frac{k(k+1)}{n+1}.
 \label{eq:expected-tau-square}
\end{equation}
Taking expectations in Eq.~\eqref{eq:entropy-energy-lower}, we obtain
\begin{equation}
 \boxed{
 \E S_{1,n}
 \ge\frac12\tanh^2(2s)
 \frac{k(n-k)}{n+1}.}
 \label{eq:mean-lower-bound}
\end{equation}
If \(k_n/n\to r\in(0,1)\), then there is \(b_{s,r}>0\) such that
\begin{equation}
 \E S_{1,n}\ge b_{s,r}n
 \label{eq:extensive-mean}
\end{equation}
for all sufficiently large \(n\).

\section{Concentration and proof of the main theorem}
\label{sec:concentration}

We apply External Theorem~\ref{thm:external-haar}, the concentration property
of normalized Haar measure on the full unitary group \(\U(n)\).  {\color{black}Its precise
formulation and references appear in Appendix~\ref{app:lean-boundary}.}
With the unnormalized chordal Hilbert Schmidt distance defined above, for
\(n\ge1\), normalized Haar \(U\in\U(n)\), \(L>0\), and any real function
\(G\) satisfying, for all \(U_1,U_2\in\U(n)\),
\[
 |G(U_1)-G(U_2)|\le L\|U_1-U_2\|_{\HS},
\]
the function is continuous, bounded, measurable, and integrable because \(\U(n)\) is compact, and for every \(t>0\) it satisfies the one sided, mean centered bound
\begin{equation}
 \Pp\{G-\E G\ge t\}
 \le\exp\!\left(-\frac{nt^2}{12L^2}\right),
 \qquad t>0.
 \label{eq:haar-concentration}
\end{equation}
Applying the same bound to \(-G\) and taking a union gives
\[
 \Pp\{|G-\E G|\ge t\}
 \le2\exp[-nt^2/(12L^2)].
\]

For \(n\ge2\), choose
\begin{equation}
 \eta_n=n^{-2}.
 \label{eq:eta-choice}
\end{equation}
Equations~\eqref{eq:uniform-approximation} and \eqref{eq:L-bound}, together with \(k\le n\), imply
\begin{align}
 R_n&:=\sup_U|S_{1,n}(U)-F_{n,\eta_n}(U)|\notag\\
 &\le C_s\frac{\log(en^2)}{n}=o(1),
 \label{eq:R-bound}\\
 L_{n,\eta_n}&\le2C_s\sqrt n\log(en^2).
 \label{eq:L-final}
\end{align}
Writing \(m_n=\E S_{1,n}\), the uniform comparison gives
\begin{equation}
 |S_{1,n}-m_n|
 \le
 |F_{n,\eta_n}-\E F_{n,\eta_n}|+2R_n.
 \label{eq:centered-comparison}
\end{equation}

Fix \(\varepsilon>0\).  By Eq.~\eqref{eq:extensive-mean} and \(R_n=o(1)\), for all sufficiently large \(n\),
\begin{equation}
 2R_n\le\frac{\varepsilon m_n}{2}.
 \label{eq:R-negligible}
\end{equation}
Using Eqs.~\eqref{eq:haar-concentration}, \eqref{eq:L-final}, and \eqref{eq:centered-comparison},
\begin{align}
 &\Pp\{|S_{1,n}-m_n|\ge\varepsilon m_n\}\nonumber\\
 &\quad\le
 \Pp\!\left\{
 |F_{n,\eta_n}-\E F_{n,\eta_n}|
 \ge\frac{\varepsilon m_n}{2}
 \right\}\nonumber\\
 &\quad\le
 2\exp\!\left[-c_{s,r}
 \frac{\varepsilon^2n^2}{\log^2(en)}\right].
 \label{eq:tail-conclusion}
\end{align}
This is Eq.~\eqref{eq:main-tail} and proves convergence in probability.  The
right hand side is summable in \(n\), so the first Borel Cantelli lemma proves
almost sure convergence for every coupling whose marginal at size \(n\) is
normalized Haar on \(\U(n)\).  No independence across dimensions is required.

The same argument yields a variance bound and a refinement of weak typicality.

\begin{corollary}[Variance and deviations below every polynomial scale]
\label{cor:variance}
Under the assumptions of Theorem~\ref{thm:main},
\begin{equation}
 \Var S_{1,n}=O_s\!\left(\log^2(en)\right).
 \label{eq:variance-bound}
\end{equation}
Moreover, for every fixed \(a>0\),
\begin{equation}
 \frac{S_{1,n}-\E S_{1,n}}{n^a}
 \xrightarrow{\Pp}0,
 \label{eq:subpolynomial-deviation}
\end{equation}
and the same convergence holds almost surely for every coupling with the
prescribed normalized Haar marginals.
\end{corollary}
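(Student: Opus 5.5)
The plan is to reuse the deterministic comparison and the Haar tail, both of which hold without the relative normalization. By Eq.~\eqref{eq:centered-comparison}, with \(\eta_n=n^{-2}\),
\[
 |S_{1,n}-m_n|\le |F_{n,\eta_n}-\E F_{n,\eta_n}|+2R_n ,
\]
and \(R_n\le C_s\log(en^2)/n\le C_s'\) uniformly in \(n\).

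For the variance, I would integrate the two sided form of Eq.~\eqref{eq:haar-concentration}. The Lipschitz constant is controlled by Eq.~\eqref{eq:L-final}, \(L_{n,\eta_n}^2\le 4C_s^2 n\log^2(en^2)\). Writing \(G=F_{n,\eta_n}\), this gives
\[
 \E(G-\E G)^2=\int_0^\infty 2t\,\Pp\{|G-\E G|\ge t\}\,dt
 \le \int_0^\infty 4t\,e^{-nt^2/(12L^2)}\,dt
 =\frac{24L^2}{n}.
\]
The right side is at most \(96C_s^2\log^2(en^2)\le 384C_s^2\log^2(en)\). Since the variance is the minimal mean square deviation about a constant,
\[
 \Var S_{1,n}\le \E(S_{1,n}-\E F_{n,\eta_n})^2\le 2\E(G-\E G)^2+2R_n^2 .
\]
This yields Eq.~\eqref{eq:variance-bound}. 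This argument does not use the proportional assumption, only \(k\le n\).

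For Eq.~\eqref{eq:subpolynomial-deviation}, fix \(a,\delta>0\). For large \(n\) we have \(2R_n\le \delta n^a/2\), so
\[
 \Pp\{|S_{1,n}-m_n|\ge\delta n^a\}\le \Pp\{|G-\E G|\ge \delta n^a/2\}
 \le 2\exp\!\left[-\frac{\delta^2 n^{2a}}{192\,C_s^2\log^2(en^2)}\right].
\]
This tends to zero, which gives convergence in probability. It is also summable in \(n\), because \(n^{2a}/\log^2 n\) eventually exceeds \(2\log n\). The first Borel--Cantelli lemma therefore gives, for each rational \(\delta\), that almost surely \(|S_{1,n}-m_n|<\delta n^a\) eventually. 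Intersecting over countably many \(\delta\) yields almost sure convergence for any coupling, with no independence needed since only the marginals enter.

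There is no real obstacle here. The only point needing care is tracking that the constants depend only on \(s\), so that the bound is \(O_s\), and replacing \(\log(en^2)\) by \(2\log(en)\).
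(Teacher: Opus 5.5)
Your proposal is correct and follows essentially the same route as the paper: integrate the two-sided Haar tail to get \(\Var F_{n,\eta_n}\le 24L_{n,\eta_n}^2/n=O_s(\log^2(en))\), transfer this to \(S_{1,n}\) through the uniform error \(R_n\), then apply the same tail at scale \(\delta n^a\) and use Borel--Cantelli. Your explicit \(\delta\) and the countable intersection over rational \(\delta\) are slightly more careful than the paper's ``threshold \(t=n^a\)'' shorthand; in the summability step, state that \(c\,n^{2a}/\log^2(en^2)\ge 2\log n\) eventually holds for every fixed \(c>0\), so the constant \(\delta^2/(192C_s^2)\) is absorbed.
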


\begin{proof}
Integrating the tail in Eq.~\eqref{eq:haar-concentration} gives
\(
 \Var F_{n,\eta_n}\le C L_{n,\eta_n}^2/n
 =O_s(\log^2(en)).
\)
Writing \(S_{1,n}=F_{n,\eta_n}+(S_{1,n}-F_{n,\eta_n})\), the uniform
error gives the explicit transfer
\[
 \Var S_{1,n}
 \le 2\Var F_{n,\eta_n}+2R_n^2
 =O_s(\log^2(en)).
\]
For Eq.~\eqref{eq:subpolynomial-deviation}, repeat the concentration argument with threshold \(t=n^a\); the resulting upper bound is of order
\(
 \exp[-c_{s,a}n^{2a}/\log^2(en)]
\), which is summable.
\end{proof}

The same estimate has a particularly transparent tightness formulation.

\begin{corollary}[Logarithmic tightness]
\label{cor:log-tightness}
Under the assumptions of Theorem~\ref{thm:main}, for every \(\delta>0\) there
exists \(C_{s,\delta}>0\) such that, for all sufficiently large \(n\),
\[
 \Pp\!\left\{
 |S_{1,n}-\E S_{1,n}|\ge C_{s,\delta}\log(en)
 \right\}\le\delta.
\]
Equivalently,
\(S_{1,n}-\E S_{1,n}=O_{\Pp}(\log(en))\).  In particular,
\[
 \frac{S_{1,n}-\E S_{1,n}}{\E S_{1,n}}
 =O_{\Pp}\!\left(\frac{\log n}{n}\right).
\]
\end{corollary}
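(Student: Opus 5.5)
The plan is to read the tightness statement directly off the concentration estimates already established in the proof of Theorem~\ref{thm:main}, without any new probabilistic input. The key is to keep the deviation threshold on the absolute scale \(\log(en)\) rather than on the relative scale \(\varepsilon m_n\).

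First, by Eq.~\eqref{eq:centered-comparison}, the deviation \(|S_{1,n}-m_n|\) is at most \(|F_{n,\eta_n}-\E F_{n,\eta_n}|+2R_n\). By Eq.~\eqref{eq:R-bound}, \(2R_n\le 2C_s\log(en^2)/n\le 1\le\log(en)\) for all sufficiently large \(n\). Hence, for any \(C\ge 2\), the event \(|S_{1,n}-m_n|\ge C\log(en)\) forces
\[
 |F_{n,\eta_n}-\E F_{n,\eta_n}|\ge (C-1)\log(en)\ge \tfrac{C}{2}\log(en).
\]
Next I apply the two sided form of Eq.~\eqref{eq:haar-concentration} with \(t=\tfrac{C}{2}\log(en)\) and \(L=L_{n,\eta_n}\le 2C_s\sqrt n\log(en^2)\le 4C_s\sqrt n\log(en)\), using \(\log(en^2)\le2\log(en)\). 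The exponent becomes
\[
 \frac{n t^2}{12L^2}\ge \frac{n\,C^2\log^2(en)/4}{12\cdot16C_s^2\,n\log^2(en)}=\frac{C^2}{768\,C_s^2}.
\]
The factors of \(n\) and \(\log^2(en)\) cancel exactly, and this cancellation is the crux of the argument. It yields
\[
 \Pp\{|S_{1,n}-\E S_{1,n}|\ge C\log(en)\}\le 2\exp\!\left[-\frac{C^2}{768C_s^2}\right].
\]
Choosing \(C_{s,\delta}=\max\{2,\;C_s\sqrt{768\log(2/\delta)}\}\) makes the right side at most \(\delta\), uniformly in large \(n\). Tightness of \((S_{1,n}-\E S_{1,n})/\log(en)\) then follows, since the finitely many small \(n\) are handled trivially: \(S_{1,n}\) is deterministically bounded by \(k_nh(\cosh 2s)\).

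For the relative statement, I divide by \(\E S_{1,n}\ge b_{s,r}n\) from Eq.~\eqref{eq:extensive-mean}, valid for large \(n\). This gives
\[
 \left|\frac{S_{1,n}-\E S_{1,n}}{\E S_{1,n}}\right|\le \frac{C_{s,\delta}\log(en)}{b_{s,r}n}
\]
with probability at least \(1-\delta\). Finally, \(\log(en)\le 2\log n\) for \(n\ge3\) puts this in the form \(O_{\Pp}(\log n/n)\). The only point needing care is the \(n\)-independence of the constant, which the exponent computation above already shows. The constant for the relative version depends on \(r\) through \(b_{s,r}\), which the statement permits since \(O_{\Pp}\) carries no dependence restriction.
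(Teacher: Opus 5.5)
Your proposal is correct and follows essentially the route the paper intends. The paper states this corollary without a separate proof, as a direct reading of the Haar concentration bound for $F_{n,\eta_n}$ with $L_{n,\eta_n}=O_s(\sqrt n\log(en))$ and the uniform error $R_n=o(1)$ (equivalently, Chebyshev applied to $\Var S_{1,n}=O_s(\log^2(en))$), and your explicit cancellation at threshold $t=\tfrac{C}{2}\log(en)$ is exactly that computation; the only cosmetic fix is that your formula for $C_{s,\delta}$ needs $\delta<2$ to keep the square root real, while larger $\delta$ are trivial.
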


\section{Discussion}
\label{sec:discussion}

\subsection{What the theorem adds to the Page curve}

An ensemble average alone is compatible with sample fluctuations of order \(n\)
or several distinct macroscopic entropy populations.  Theorem~\ref{thm:main}
rules out both possibilities.  Combining it with the published mean theorem,
Theorem~\ref{thm:published-page-mean}, shows that for each fixed ratio \(r\)
and the prescribed fixed ratio subsystem sequence, the entropy density of a
Haar distributed interferometer converges in probability to the Page curve
value \(\mathcal P_1(s,r)\).  This is pointwise in \(r\); it does not assert
simultaneous concentration over all ratios or all subsets of output modes.

For each fixed \(n\), if the two sets of output modes are assigned to two
parties, \(S_{1,n}\) is the standard entropy of entanglement of that bipartite
pure state.  The theorem concerns the separate limit in which the number of
modes grows.  We make no additional claim here about many copy entanglement
manipulation in infinite dimensional local Hilbert spaces.  Corollary~\ref{cor:volume-law}
gives a direct typical extensive entanglement statement without importing the
explicit Page formula.  Integer R\'{e}nyi entropies probe different portions of the
entanglement spectrum and need not determine the von Neumann entropy without
additional control.

\subsection{Why the logarithmic cutoff is sufficient}

The proof illustrates a potentially reusable cutoff strategy.  The scalar entropy profile is continuous, but its derivative has a logarithmic endpoint singularity.  Truncating within \(\eta\) of the singular value endpoint \(t=1\), equivalently the pure symplectic value \(\nu=1\), costs \(O(k\eta\log(1/\eta))\), whereas the Lipschitz constant grows only as \(O(\sqrt{k}\log(1/\eta))\).  With \(\eta=n^{-2}\), the deterministic cost vanishes while concentration remains strong enough to control extensive deviations.  A direct exact variance calculation is therefore unnecessary.

The method may apply to other spectral observables whose derivative has an integrable endpoint singularity.  It is less likely to work unchanged for observables with power law singularities, where the cutoff error and Lipschitz cost can compete at the extensive scale.

\subsection{Limitations and open directions}

First, Appendix~\ref{app:no-strong} shows that strong typicality fails for
every \(r\ne\tfrac12\).  At the balanced ratio \(r=\tfrac12\), the limiting
Jacobi support reaches the logarithmically singular entropy endpoint, so the
standard \(C^1\) Jacobi central limit theorem used there does not apply
directly.  The present concentration argument still proves weak typicality at
that ratio.  Determining its centered limiting law, as well as sharpening the
general \(O(\log^2 n)\) variance bound, remains open.

Second, equal squeezing makes the covariance reduction depend on one
principal block \(B_U\) of a COE matrix.  With unequal squeezing, the relevant
matrix couples the interferometer to a nonconstant diagonal squeezing profile,
and the reduction to a singular value statistic of one principal COE block is
lost.  Extending weak typicality to experimentally relevant unequal squeezing
is a natural next problem.

Third, Haar measure provides the maximally randomized passive unitary benchmark
against which finite depth architectures can be compared.  For one dimensional
random brickwall networks, Shou \emph{et al.} prove at most diffusive growth of
R\'{e}nyi entropy of order 2.  For more general circuit geometries they give
separate depth bounds for attaining large average subsystem entanglement and
for approaching Haar measure in Wasserstein distance~\cite{Shou2026}.  The
corresponding proportional subsystem von Neumann concentration problem remains
open.

Finally, the result is a statement about entanglement of the premeasurement Gaussian state.  It does not, by itself, establish sampling hardness or an anticoncentration property of output probabilities.  It supplies one rigorous component of the broader program of relating entanglement, circuit depth, and quantum sampling complexity.

\section{Conclusion}

We proved weak typicality of the von Neumann entanglement entropy for the ideal
pure premeasurement state in the fixed nonzero equal squeezing Haar ensemble
at proportional subsystem size.  The
proof closes that Haar random \(k=\Theta(n)\) cell explicitly left open in
Table~I of Ref.~\cite{Youm2025}, without evaluating the squared
hypergeometric series in a direct variance expansion.  Its ingredients are an
exact principal COE block singular value representation, a vanishing endpoint
cutoff, a self contained extensive mean lower bound, and Haar concentration.
The result yields a typical volume law, logarithmic tightness, and
\(\Var S_{1,n}=O_s(\log^2n)\).  Together with the published mean Page curve of
Youm \emph{et al.}~\cite{Youm2025}, it promotes the Page curve value at each
fixed proportional bipartition to a typical sample statement.

\appendix

\section{\textcolor{black}{External mathematical inputs used in the formal verification}}
\label{app:lean-boundary}

{\color{black}
The formalization represents three groups of cited results as explicit theorem
assumptions: full unitary Haar concentration, Mirsky's singular value
inequality, and the Gaussian covariance and normal form results used in the
spectral reduction.  The Gaussian assumptions are displayed below in two
clauses: the covariance and symplectic spectrum statements form clause~(i),
and the thermal normal form is clause~(ii).  The independently published mean
limit in Eq.~\eqref{eq:known-page-curve}, used only for the Page curve
corollary, is stated separately as
Theorem~\ref{thm:published-page-mean}.
\par}

\setcounter{theorem}{0}
\renewcommand{\thetheorem}{A.\arabic{theorem}}
\renewcommand{\theHtheorem}{appendix.\arabic{theorem}}

\begin{theorem}[Full unitary Haar concentration]
\label{thm:external-haar}
Let \(n\ge1\), let \(U\) have normalized Haar distribution on \(\U(n)\),
and let \(G:\U(n)\to\R\) be integrable.  Suppose that \(L>0\) and
\begin{equation}
 |G(U_1)-G(U_2)|\le L\|U_1-U_2\|_{\HS}
 \label{eq:external-haar-lipschitz}
\end{equation}
for all \(U_1,U_2\in\U(n)\), where
\(
 \|A\|_{\HS}^2=\tr(AA^*)=\sum_{a,b}|A_{ab}|^2
\)
is the unnormalized chordal Hilbert Schmidt norm.  Then, for every \(t>0\),
\begin{equation}
 \Pp\!\left\{G(U)\ge \E G(U)+t\right\}
 \le \exp\!\left(-\frac{nt^2}{12L^2}\right).
 \label{eq:external-haar-tail}
\end{equation}
\end{theorem}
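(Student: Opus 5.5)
The bound in Theorem~\ref{thm:external-haar} is the standard Gromov--Milman / Bakry--\'Emery concentration for $\U(n)$, and I would prove it by reducing to $\mathrm{SU}(n)$, where a log-Sobolev inequality is available. The tool is the Meckes--Meckes theorem (Ref.~\cite{Meckes2019}, Thm.~5.16--5.17). Write $\U(n)$ as the image of $\mathrm{U}(1)\times\mathrm{SU}(n)$ under the map $(\theta,V)\mapsto e^{i\theta}V$ with $\theta\in[0,2\pi/n)$. Haar measure on $\U(n)$ is the pushforward of the product of uniform measure on the angle interval and Haar measure on $\mathrm{SU}(n)$.

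\textbf{Step 1: log-Sobolev on $\mathrm{SU}(n)$.} Equip $\mathrm{SU}(n)$ with the Riemannian metric induced by the Hilbert--Schmidt inner product. Its Ricci curvature is $n/2$, so Bakry--\'Emery gives a log-Sobolev inequality with constant $2/n\cdot$(const). I would take this curvature value as the key input.

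\textbf{Step 2: log-Sobolev on $\U(n)$.} For the angle part, the curve $\theta\mapsto e^{i\theta}I$ has Hilbert--Schmidt speed $\sqrt n$. The angle interval therefore has length $2\pi/\sqrt n$ in the induced metric, and the one-dimensional log-Sobolev constant scales like (length)$^2\sim 1/n$. Tensorization then gives a log-Sobolev inequality on $\mathrm{U}(1)\times\mathrm{SU}(n)$ with constant $C/n$. This transfers to $\U(n)$ because the covering map is $1$-Lipschitz from the product geodesic metric (with the appropriate scaling) onto $\U(n)$. The Meckes bookkeeping gives the constant $6/n$.

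\textbf{Step 3: Herbst argument.} A log-Sobolev inequality with constant $\kappa$ yields sub-Gaussian tails $\Pp\{G\ge\E G+t\}\le\exp(-t^2/(2\kappa L_g^2))$, where $L_g$ is the geodesic Lipschitz constant. Geodesic distance dominates the chordal distance $\|U_1-U_2\|_{\HS}$, so $L_g\le L$. Taking $\kappa=6/n$ gives $\exp(-nt^2/(12L^2))$. Integrability is automatic, since a Lipschitz function on a compact space is bounded.

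\textbf{Main obstacle.} The delicate step is getting the exact constant $12$ in Step~2. Matching the angle scaling to the $\mathrm{SU}(n)$ curvature constant is fiddly. In practice I would cite Meckes' Theorem~5.17 directly rather than rederive these constants.
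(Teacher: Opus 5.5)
Your proposal is correct and follows the same route the paper cites: it takes the $\U(n)$ log-Sobolev constant $6/n$ from Meckes (cited rather than rederived, as you also propose), applies the one-sided Herbst bound with $\kappa=6/n$, and uses $L_g\le L$ because geodesic distance dominates chordal Hilbert--Schmidt distance. The constant bookkeeping you flag is not a real obstacle, since any log-Sobolev constant at most $6/n$ suffices, and because $iI$ is Hilbert--Schmidt orthogonal to the traceless skew-Hermitian directions, your map $(\theta,V)\mapsto e^{i\theta}V$ is a local isometry from $[0,2\pi/n)\times\mathrm{SU}(n)$ with metric $n\,d\theta^2+g_{\mathrm{SU}(n)}$, which already gives the constant $\max\{4/n,2/n\}=4/n$.
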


The exact one factor statement is Corollary~17 of Meckes and
Meckes~\cite{MeckesMeckes2013Unitary}; their Theorem~15 supplies the
full \(\U(n)\) log Sobolev constant \(6/n\).  The monograph derivation uses
Theorem~5.16 of Ref.~\cite{Meckes2019} and the one sided part of the Herbst
argument in its Theorem~5.5.  The displayed statement of that book theorem is
two sided and consequently has prefactor \(2\); Eq.~\eqref{eq:external-haar-tail}
is the one sided form.  {\color{black}Equivalent formulations with the same one sided
convention and denominator \(12L^2\) also appear in Supplemental Appendix~A,
Theorem~2, of Ref.~\cite{SinghKorbiczCerf2023} and Appendix~A, Theorem~6, of
Ref.~\cite{NakataWakakuwaKoashi2023}.}  {\color{black}Theorem~\ref{thm:external-haar}
uses the unnormalized chordal Hilbert Schmidt metric; normalized Frobenius or
geodesic conventions would change the constants.}

\begin{theorem}[Mirsky's complete list Frobenius inequality]
\label{thm:external-mirsky}
Let \(n\in\mathbb N_0\) and \(A,B\in\C^{n\times n}\).  Write their complete
singular value lists, including zeros and in the same nonincreasing order, as
\begin{align*}
 s_1(A)&\ge\cdots\ge s_n(A)\ge0,\\
 s_1(B)&\ge\cdots\ge s_n(B)\ge0.
\end{align*}
Then
\begin{equation}
 \begin{aligned}
 \sum_{j=1}^{n}\bigl(s_j(A)-s_j(B)\bigr)^2
 &\le \|A-B\|_{\HS}^2,\\
 \|A-B\|_{\HS}^2
 &=\sum_{a,b=1}^{n}|A_{ab}-B_{ab}|^2.
 \end{aligned}
 \label{eq:external-mirsky}
\end{equation}
No invertibility, rank, probability, or simple spectrum hypothesis is made.
\end{theorem}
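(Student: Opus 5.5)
The plan is to reduce to the Hermitian (or real symmetric) case via the standard Jordan--Wielandt dilation and then apply the Hoffman--Wielandt inequality. For \(A\in\C^{n\times n}\), form the \(2n\times2n\) Hermitian matrix
\[
 \widehat A=\begin{pmatrix}0&A\\A^*&0\end{pmatrix},
\]
whose eigenvalues are \(\pm s_1(A),\ldots,\pm s_n(A)\), counted with multiplicity. This follows from the singular value decomposition \(A=W\Sigma V^*\): the vectors \((Wx,\pm Vx)/\sqrt2\) diagonalize \(\widehat A\). Do the same for \(B\). Then \(\widehat A-\widehat B\) is the dilation of \(A-B\), so \(\|\widehat A-\widehat B\|_{\HS}^2=2\|A-B\|_{\HS}^2\).

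Next, invoke Hoffman--Wielandt for Hermitian matrices. If \(\lambda_1\ge\cdots\ge\lambda_{2n}\) and \(\mu_1\ge\cdots\ge\mu_{2n}\) are the ordered eigenvalues of \(\widehat A\) and \(\widehat B\), then
\[
 \sum_i(\lambda_i-\mu_i)^2\le\|\widehat A-\widehat B\|_{\HS}^2 .
\]
To keep the proposal self-contained, I would prove this as follows. Write \(\widehat A=P\Lambda P^*\) and \(\widehat B=R M R^*\). Expanding the norm reduces the claim to showing that \(\tr(\widehat A\widehat B)=\sum_{i,j}\lambda_i\mu_j|(P^*R)_{ij}|^2\) is at most \(\sum_i\lambda_i\mu_i\). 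The matrix \(D_{ij}=|(P^*R)_{ij}|^2\) is doubly stochastic, so by Birkhoff's theorem the linear functional \(\sum_{i,j}\lambda_i\mu_jD_{ij}\) is maximized at a permutation matrix. The rearrangement inequality then shows that the identity permutation, which pairs the sorted lists, is optimal.

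Finally, the sorted eigenvalue list of \(\widehat A\) is \(s_1(A),\ldots,s_n(A),-s_n(A),\ldots,-s_1(A)\), and similarly for \(\widehat B\). Sorted matching therefore pairs \(s_j(A)\) with \(s_j(B)\) twice, once with each sign. The left side equals \(2\sum_j(s_j(A)-s_j(B))^2\), and dividing by two gives the claim. Zeros and repeated values cause no trouble, since everything is phrased through complete multisets. The case \(n=0\) is vacuous, and no invertibility or simplicity of the spectrum is used.

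The only step that is not bookkeeping is the Hoffman--Wielandt bound. Its crux is the Birkhoff-plus-rearrangement argument, and this is where I would take care: I would check that \(D\) is doubly stochastic, which holds because \(P^*R\) is unitary, and that the maximum of a linear functional over the Birkhoff polytope is attained at a vertex.
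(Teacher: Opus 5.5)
Your proof is correct, but it differs from the paper, which gives no proof of its own. The paper records the statement as an external input and cites Mirsky's theorem for unitarily invariant norms, namely \(\Phi\bigl(s(A)-s(B)\bigr)\le\Phi\bigl(s(A-B)\bigr)\) for every symmetric gauge function \(\Phi\). The displayed inequality then follows by taking \(\Phi\) to be the Euclidean norm and squaring.

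The standard proof of that general result, in the framework of Chapter~IV of Bhatia that the paper points to, starts with the same Jordan--Wielandt dilation you use. It then applies the Lidskii--Wielandt eigenvalue inequalities to get the weak majorization \(|s(A)-s(B)|\prec_w s(A-B)\), and finishes with Fan's dominance theorem.

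You replace that majorization machinery with the Hoffman--Wielandt inequality. You prove it through the identity \(\tr(\widehat A\widehat B)=\sum_{i,j}\lambda_i\mu_j|(P^*R)_{ij}|^2\), the doubly stochastic matrix \(D\), Birkhoff's theorem, and rearrangement.

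\emph{What your route buys:} it is shorter and uses only the spectral theorem, the singular value decomposition, and Birkhoff's theorem. That is exactly enough here, because the paper only ever uses the Frobenius case, in Eq.~\eqref{eq:mirsky}. It is also elementary enough that it could in principle replace the assumed external theorem in a formal development.

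\emph{What it gives up:} generality. The expansion \(\|X-Y\|_{\HS}^2=\|X\|_{\HS}^2+\|Y\|_{\HS}^2-2\tr(XY)\) is specific to the Hilbert--Schmidt norm, so your argument does not give the operator norm or Ky Fan norm versions that Mirsky's theorem supplies.

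Your bookkeeping is sound. The vectors \((We_j,\pm Ve_j)/\sqrt2\) are orthonormal eigenvectors, and both sides pick up the same factor \(2\). The sorted list \(s_1,\ldots,s_n,-s_n,\ldots,-s_1\) pairs \(s_j(A)\) with \(s_j(B)\) in both halves, with zeros and repeated values handled automatically.
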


This is the Euclidean gauge/Frobenius specialization of Mirsky's
singular value variation theorem~\cite{Mirsky1960}; squaring its nonnegative
norm inequality gives Eq.~\eqref{eq:external-mirsky}.  Chapter~IV of
Ref.~\cite{Bhatia1997} gives the standard textbook framework of symmetric
gauge functions and unitarily invariant norms used in this specialization.

\begin{theorem}[Gaussian normal forms used in the reduction]
\label{thm:external-gaussian}
The Gaussian input consists of the following two clauses, in the convention
that the vacuum symplectic eigenvalue equals one.

\emph{(i) Equal squeezing symplectic spectrum.}
Let \(k\in\mathbb N_0\), \(s\in\R\), and let
\(B\in\C^{k\times k}\) be a complex symmetric contraction.  Write
\(B^{\mathsf T}=B\) and let
\(1\ge\tau_1\ge\cdots\ge\tau_k\ge0\) be its singular values.  Set
\[
 A_s=\sinh(2s),\qquad B_s=\cosh(2s),
\]
and
\[
 M(B)=\left(\begin{smallmatrix}
 \Re B&\Im B\\ \Im B&-\Re B
 \end{smallmatrix}\right),
\]
and define
\(
 \sigma(B;s)=B_sI_{2k}+A_sM(B).
\)
This matrix is a valid Gaussian covariance matrix in the stated convention.
Every Gaussian state with this covariance has, in the order induced by the
singular values, nondecreasing Williamson symplectic eigenvalues
\(1\le\nu_1\le\cdots\le\nu_k\), and for \(j=1,\ldots,k\),
\begin{equation}
 \nu_j(B;s)
 =\sqrt{1+\sinh^2(2s)\bigl(1-\tau_j^2\bigr)}.
 \label{eq:external-gaussian-symplectic}
\end{equation}

\emph{(ii) Thermal spectrum.}
Let \(\rho\) be a Gaussian density operator on finitely many modes, with \(k\)
modes and finite
second moments and Williamson symplectic eigenvalues
\(\nu_1,\ldots,\nu_k\).  Then \(\nu_j\ge1\), and the complete density operator
spectrum, indexed by
\(\boldsymbol n=(n_1,\ldots,n_k)\in\mathbb N_0^k\), is
\begin{equation}
 \lambda_{\boldsymbol n}
 =\prod_{j=1}^{k}(1-q_j)q_j^{n_j},
 \qquad q_j=\frac{\nu_j-1}{\nu_j+1},
 \label{eq:external-gaussian-thermal}
\end{equation}
with the convention \(0^0=1\).
\end{theorem}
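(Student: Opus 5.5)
For clause~(i), I would reduce to a diagonal matrix using Takagi's factorization. Since \(B\) is complex symmetric, there is a unitary \(W\in\U(k)\) with \(B=WDW^{\mathsf T}\) and \(D=\diag(\tau_1,\ldots,\tau_k)\). I would then verify the identity
\[
 M(WDW^{\mathsf T})=O_W\,M(D)\,O_W^{\mathsf T}.
\]
Here \(O_W\) is the realification from Eq.~\eqref{eq:passive-map}. This is the same realification bookkeeping that produced Eq.~\eqref{eq:reduced-covariance} from \(\sigma(U)=O_U\sigma_0O_U^{\mathsf T}\), so it needs only a direct block multiplication. Since \(O_W\) is orthogonal, it follows that \(\sigma(B;s)=O_W\,\sigma(D;s)\,O_W^{\mathsf T}\).

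Because \(O_W\) is symplectic, the Williamson eigenvalues of \(\sigma(B;s)\) and \(\sigma(D;s)\) coincide. The matrix \(\sigma(D;s)\) splits, after reordering to \((q_j,p_j)\) pairs, into \(2\times2\) blocks \(\diag(B_s+A_s\tau_j,\;B_s-A_s\tau_j)\). The symplectic eigenvalue of each block is the square root of its determinant, namely \(\sqrt{B_s^2-A_s^2\tau_j^2}\). Using \(B_s^2-A_s^2=1\), this equals \(\sqrt{1+A_s^2(1-\tau_j^2)}\), which is Eq.~\eqref{eq:external-gaussian-symplectic}.

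The remaining parts of clause~(i) follow from this. Since \(\tau_j\le1\), each block has \(\nu_j\ge1\) and \(B_s-|A_s|\tau_j>0\). Hence each block satisfies \(\sigma+i\Omega\succeq0\), and so does the congruent full matrix. By the standard existence of a Gaussian state with prescribed valid covariance, \(\sigma(B;s)\) is a valid covariance matrix. Monotonicity of \(\nu\) in \(\tau\) turns the nonincreasing \(\tau_j\) into nondecreasing \(\nu_j\).

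For clause~(ii), I would first deduce \(\nu_j\ge1\) from the uncertainty relation. Williamson's theorem gives a symplectic \(S\) with \(S\sigma S^{\mathsf T}=\diag(\nu,\nu)\). Suppose some \(\nu_j<1\). Testing \(\sigma+i\Omega\succeq0\) on the vector \(S^{\mathsf T}(e_{q_j}+ie_{p_j})\), which is transported by \(S\), would give a negative value, a contradiction. Next, the metaplectic representation implements \(S\) by a Gaussian unitary, and a displacement removes first moments. Both operations preserve the spectrum, so \(\rho\) is unitarily equivalent to the Gaussian state with covariance \(\diag(\nu,\nu)\). By uniqueness of a Gaussian state given its moments, that state is \(\bigotimes_j\rho_{\mathrm{th}}(\bar n_j)\) with \(\nu_j=2\bar n_j+1\). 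This uses the one-mode check that a thermal state has covariance \((2\bar n+1)I_2\) in the \(\hbar=2\) convention. Its Fock eigenvalues are \((1-q_j)q_j^{m}\) with \(q_j=\bar n_j/(\bar n_j+1)=(\nu_j-1)/(\nu_j+1)\). Taking tensor products gives Eq.~\eqref{eq:external-gaussian-thermal}, including the \(q_j=0\) case under \(0^0=1\).

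The step I expect to be the main obstacle is the genuinely infinite-dimensional input in clause~(ii): that every symplectic \(S\) is implemented by a Gaussian unitary acting on Gaussian states by \(\sigma\mapsto S\sigma S^{\mathsf T}\), and that a Gaussian state is determined by its first and second moments. I would not reprove these. I would cite them from Sections~3.2--3.4 of Ref.~\cite{Serafini2017} and Eqs.~(44)--(49) of Ref.~\cite{Weedbrook2012}, checking only the normalization constants. By contrast, clause~(i) is finite-dimensional linear algebra once the Takagi--realification identity is verified.
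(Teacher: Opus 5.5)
Your proposal is correct. For clause (ii) it is essentially the paper's argument: a Williamson diagonalization implemented by a Gaussian unitary, removal of the displacement, identification with $\bigotimes_j\rho_{\mathrm{th}}(\bar n_j)$ where $\bar n_j=(\nu_j-1)/2$, and the geometric Fock spectrum.

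For clause (i) your route is different. In Section~\ref{sec:spectral} the paper never diagonalizes $B$; Takagi appears there only as a named background ingredient. Its actual derivation has three steps:
\begin{itemize}
\item the anticommutation $\Omega_kM=-M\Omega_k$ gives $(i\Omega_k\sigma)^2=B_s^2I_{2k}-A_s^2M^2$;
\item $M^2$ is the realification of $BB^*$, so its spectrum is $\tau_j^2$ with each value doubled;
\item $\nu_j$ is then extracted using positivity of $\sigma$, diagonalizability of $i\Omega_k\sigma$, and the $\pm\nu_j$ pairing.
\end{itemize}

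You instead apply Takagi explicitly and transport it through $M(WBW^{\mathsf T})=O_WM(B)O_W^{\mathsf T}$. This identity does hold, because $M(B)$ is the realification of the antilinear map $v\mapsto B\bar v$ and $O_W^{\mathsf T}=O_{W^*}$. It reduces $\sigma(B;s)$, by an orthogonal symplectic congruence, to a direct sum of one-mode blocks $\diag(B_s+A_s\tau_j,\,B_s-A_s\tau_j)$.

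What your route buys is an explicit passive normal form. Positivity, the uncertainty relation $\sigma+i\Omega_k\succeq0$ (hence validity for an arbitrary complex symmetric contraction), and the symplectic spectrum all follow from this one congruence. It also has a clear physical reading: a local passive unitary turns the reduced state into a product of squeezed thermal modes.

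The paper's route needs only the singular values and two algebraic identities, with no factorization of $B$, which suits its spectral formal interface. The cost is that it must separately invoke diagonalizability, the $\pm\nu_j$ pairing, and Williamson's theorem to pass from the spectrum of $(i\Omega_k\sigma)^2$ to the symplectic eigenvalues and to validity.
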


The classical ingredients for part~(i) are Takagi's unitary congruence
factorization~\cite{Takagi1924} and Williamson's symplectic normal
form~\cite{Williamson1936}; Sections~3.2.3 and 3.2.4 of
Ref.~\cite{Serafini2017} give a modern Gaussian state treatment.  For the
specific ensemble, specializing and translating the conventions of
Appendix~A, Eq.~(A4), of Ref.~\cite{Iosue2023} gives the reduced covariance.
The displayed identity immediately preceding Eq.~(B6) in Appendix~B of
Ref.~\cite{Youm2025} gives the corresponding squared symplectic matrix;
Eq.~(B6) is the subsequent R\'{e}nyi entropy expression.  Equation
\eqref{eq:external-gaussian-symplectic} also uses the elementary identification
of the spectrum of the realification of \(BB^*\) with the squared singular
values of \(B\).  {\color{black}Together, these results yield the specialized
identity in Eq.~\eqref{eq:external-gaussian-symplectic}.}

For part~(ii), Eqs.~(16) through (18) and (41) through (44) of
Ref.~\cite{HolevoSohmaHirota1999} give the geometric one mode spectrum and the
general Gaussian product normal form.  In the same vacuum eigenvalue one
convention, Eqs.~(26) and (27) and Eqs.~(44) through (49) of Ref.~\cite{Weedbrook2012} give
the one mode thermal spectrum, Williamson form, uncertainty bound, entropy,
and Gaussian unitary thermal decomposition; Sections~3.2.3, 3.2.4, and 3.5 of
Ref.~\cite{Serafini2017} give a textbook account.

{\color{black}
Within the formalization, Eq.~\eqref{eq:external-gaussian-thermal} is used to
prove normalization of the geometric spectra, evaluate the entropy sums at
both thermal and pure endpoints, and establish finite product additivity.  This
gives
\par}
\begin{equation}
 -\sum_{\boldsymbol n}\lambda_{\boldsymbol n}
    \log\lambda_{\boldsymbol n}
 =\sum_{j=1}^{k}h(\nu_j).
 \label{eq:external-gaussian-entropy-derived}
\end{equation}
{\color{black}
Thus the Gaussian interface assumes the reduced covariance formula, its
Williamson symplectic spectrum, and the Gaussian thermal normal form, whereas
Eq.~\eqref{eq:external-gaussian-entropy-derived} is derived within the
formalization.
\par}

\begin{theorem}[Published mean von Neumann Page curve]
\label{thm:published-page-mean}
Let \(s\in\R\) and \(r\in[0,1]\) be fixed.  In the equal squeezing Haar
ensemble, take \(k=rn\) along dimensions for which \(rn\) is integral.  Then
Youm \emph{et al.}, Theorem~1 and the paragraph following their
Eq.~(4)~\cite{Youm2025}, prove Eq.~\eqref{eq:known-page-curve}, with
\[
 \mathcal P_1(s,r)
 =\sum_{i=1}^{\infty}\left[\frac{1}{2i}-D_i(s)\right]G_i(r),
\]
where
\[
 \begin{aligned}
 D_i(s)&=\frac{1}{3}\operatorname{sech}^{2}(2s)\tanh^{2i}(2s)\\
 &\quad\times{}_2F_1\!\left(\frac32,1+i;\frac52;
                         \operatorname{sech}^{2}(2s)\right),\\
 G_i(r)&=r-r^{i+1}C_i\,{}_2F_1(1-i,i;2+i;r),\\
 C_i&=\frac{1}{i+1}\binom{2i}{i}.
 \end{aligned}
\]
Their finite size formula also contains an \(n\) independent term
\(H_i(r)=4^{i-1}[r(1-r)]^i\) and an \(o(1)\) remainder; both vanish after
division by \(n\).  Thus Eq.~\eqref{eq:known-page-curve} is a cited theorem in
{\color{black}the present argument and supplies the mean limit used in
Corollary~\ref{cor:page}.}
\end{theorem}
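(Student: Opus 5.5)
Since Theorem~\ref{thm:published-page-mean} is quoted from Ref.~\cite{Youm2025}, the first thing I would do is check that it is a faithful restatement in the present conventions. Then I would give an independent route to Eq.~\eqref{eq:known-page-curve} that uses only the exact identity Eq.~\eqref{eq:entropy-spectral-statistic}. That route expands \(\phi_s\) in the variable \(t^2\) and computes the limiting moments of \(B_UB_U^*\).

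\textbf{Step 1: series expansion of \(\phi_s\).} I would look for an expansion of the form
\[
 \phi_s(t)=\sum_{i=1}^{\infty}c_i(s)\bigl(1-t^{2i}\bigr),\qquad c_i(s)=\frac{1}{2i}-D_i(s).
\]
To obtain it, I would write \(h(\nu)\) through \(\log\frac{\nu+1}{\nu-1}\) and Eq.~\eqref{eq:scalar-derivatives}, substitute \(\nu^2=B_s^2-A_s^2t^2\), and expand in powers of \(x=\tanh^2(2s)\,t^2<1\). The \(1/(2i)\) part should come from the logarithmic term. The hypergeometric correction \(D_i(s)\) should come from integrating the algebraic factor \(1/\nu\) term by term, which is where \({}_2F_1(\tfrac32,1+i;\tfrac52;\operatorname{sech}^2(2s))\) would appear. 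The normalization is fixed by \(\phi_s(1)=0\), which is why the natural building block is \(1-t^{2i}\) rather than \(t^{2i}\).

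\textbf{Step 2: limiting moments.} I need
\[
 \frac1n\,\E\sum_{j=1}^{k}\bigl(1-\tau_j^{2i}\bigr)=\frac{k}{n}-\frac1n\E\tr\bigl(B_UB_U^*\bigr)^i\longrightarrow G_i(r).
\]
This amounts to showing \(\frac1n\E\tr(B_UB_U^*)^i\to r^{i+1}C_i\,{}_2F_1(1-i,i;2+i;r)\). To establish this limit, I would write \(B_UB_U^*=PQ_UPQ_U^*P\), where \(P\) is the projection onto the first \(k\) modes. Since \(Q_U\) is COE, it is asymptotically Haar-like and free from \(P\). The limiting moments should then be those of the free multiplicative convolution of two Bernoulli laws with weight \(r\), namely the Jacobi/Wachter law. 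Its \(i\)th moment is a Catalan-weighted hypergeometric polynomial, which I would match to \(C_i\) and \({}_2F_1(1-i,i;2+i;r)\). The \(O(1)\) finite-size corrections, recorded as \(H_i(r)\) in Ref.~\cite{Youm2025}, disappear after dividing by \(n\).

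\textbf{Step 3: interchange of the series and the limit.} This is the step I expect to be the main obstacle. Individually the coefficients \(c_i\) behave like \(1/(2i)\) and are not summable. The logarithmic endpoint of Lemma~\ref{lem:endpoint} is exactly the statement that \(\sum_i c_i(1-t^{2i})\) converges only slowly near \(t=1\). I would first try to show \(c_i\ge0\), which is plausible since \(D_i\) carries the factor \(\tanh^{2i}(2s)\operatorname{sech}^2(2s)\). Then all terms are nonnegative, and monotone convergence allows exchanging \(\E\) and \(\sum_i\) at each fixed \(n\). For the limit \(n\to\infty\), I would control the tail \(\sum_{i>I}\) uniformly in \(n\). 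The available bound is \(\frac1n\E\sum_j(1-\tau_j^{2i})\le\frac{i}{n}\E\sum_j(1-\tau_j^2)\le i\,r\). On its own this only gives a tail \(\sum_{i>I} c_i\, i\), which diverges, so a sharper input is needed. The fallback I would use is the regularization of Section~\ref{sec:regularization}. I would replace \(\phi_s\) by \(\phi_{s,\eta}\), which is uniformly approximated by a polynomial in \(t^2\). Then I would bound the cutoff error by \(C_s\eta\log(e/\eta)\cdot k/n\) via Eq.~\eqref{eq:uniform-approximation}, and let \(\eta\to0\) after \(n\to\infty\). This reduces Step~3 to Step~2 together with weak convergence of the empirical law of the \(\tau_j^2\), which I would identify with the limiting Jacobi law from Step~2.
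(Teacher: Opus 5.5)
The paper gives no proof of this statement. It imports Eq.~\eqref{eq:known-page-curve} from Theorem~1 of Ref.~\cite{Youm2025} and uses it only in Corollary~\ref{cor:page}; Theorem~\ref{thm:main} needs only the self-contained bound~\eqref{eq:mean-lower-bound}. The Lean development records it as an explicit hypothesis. Your proposal is therefore a genuinely different route, an independent derivation from the exact identity~\eqref{eq:entropy-spectral-statistic}, and it is viable.

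Compared with the citation, your route has three gains. It replaces the finite-size hypergeometric computation by a first-order limit law for the COE block. It exhibits \(\mathcal P_1(s,r)\) as a linear statistic of the free-projection (Wachter) law. Since Step~2 works for every sequence \(k_n/n\to r\), it would also let Corollary~\ref{cor:page} drop the restriction to integral \(rn\). What it does not recover is the \(n\)-independent term \(H_i(r)\) and the remainder in the finite-size formula.

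Step~1 is right and can be made explicit. Write \(h(\nu)=\tfrac12\log\tfrac{\nu^2-1}{4}+\tfrac{\nu}{2}\log\tfrac{\nu+1}{\nu-1}\), with \(\nu^2-1=A_s^2(1-t^2)\). Expand \(\tfrac{\nu}{2}\log\tfrac{\nu+1}{\nu-1}=\sum_{m\ge0}\nu^{-2m}/(2m+1)\), where \(\nu^{-2}=\operatorname{sech}^2(2s)\,(1-\tanh^2(2s)\,t^2)^{-1}\). The first term gives \(-\sum_i t^{2i}/(2i)\). The coefficient of \(t^{2i}\) in the second is \(\tanh^{2i}(2s)\sum_{m\ge1}\binom{m+i-1}{i}\operatorname{sech}^{2m}(2s)/(2m+1)\), which is exactly the displayed \(D_i(s)\). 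This needs \(s\ne0\); at \(s=0\) the displayed \(D_i\) is \(0\cdot\infty\) and must be read as a limit.

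The target of Step~2 is also right. For two free projections of trace \(r\), the moments for \(i=1,2,3\) are \(r^2\), \(2r^3-r^4\), and \(5r^4-6r^5+2r^6\). These match \(r^{i+1}C_i\,{}_2F_1(1-i,i;2+i;r)\), though the identity still has to be proved for general \(i\). The justification ``Haar-like and free from \(P\)'' needs care, because \(Q_U\) is not invariant under unitary conjugation. It is invariant under conjugation by real orthogonal matrices. Conjugation by an independent Haar orthogonal matrix gives first-order asymptotic freeness from deterministic matrices. Alternatively, use the \(\beta=1\) Jacobi law of Appendix~\ref{app:no-strong}, whose Wachter limit does not depend on \(\beta\).

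The premise of Step~3 is false, and correcting it removes your ``main obstacle''. The coefficients are not of order \(1/(2i)\). The factor \({}_2F_1(\tfrac32,1+i;\tfrac52;\operatorname{sech}^2(2s))\) grows like a constant times \(i^{-1}\tanh^{-2i}(2s)\), which cancels the geometric factor. Hence \(D_i(s)=\tfrac{1}{2i}-\tfrac{\sinh^2(2s)}{4i^2}+O(i^{-3})\) and \(c_i\sim\sinh^2(2s)/(4i^2)\).

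Your own setup already excludes \(1/(2i)\) in two ways. At \(t=0\) your ansatz reads \(\phi_s(0)=\sum_i c_i\), which is finite. Coefficients of size \(1/(2i)\) would also make \(\phi_s'(t)\) blow up like \((1-t)^{-1}\), contradicting Lemma~\ref{lem:endpoint}. A logarithmic derivative singularity corresponds to \(c_i\asymp i^{-2}\).

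Positivity, which you left as plausible, is immediate from Eq.~\eqref{eq:scalar-derivatives}. With \(x=t^2\), \(\tfrac{d}{dx}\phi_s(\sqrt x)=-\tfrac{A_s^2}{2}\sum_{m\ge0}\nu^{-2m-2}/(2m+1)\), and all of its Taylor coefficients are negative, so \(c_i>0\). Letting \(t\uparrow1\) in \(\phi_s(t)=\phi_s(0)-\sum_i c_i t^{2i}\) gives \(\sum_i c_i=\phi_s(0)=h(\cosh 2s)<\infty\).

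Since \(0\le\frac1n\E\sum_j(1-\tau_j^{2i})\le k/n\le1\), dominated convergence over \(i\) exchanges \(\lim_n\) and \(\sum_i\) in one line. The \(\eta\)-regularization fallback is therefore unnecessary. Even within it the cutoff was superfluous, because \(x\mapsto\phi_s(\sqrt x)\) is already continuous on \([0,1]\). Moreover, the fallback's last step, identifying the limit with \(\sum_i c_iG_i(r)\), itself requires exactly this positivity or summability to exchange the series with the limiting spectral integral. The real content of your route lies in Steps~1 and~2, not Step~3.
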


{\color{black}
The formalization does not reproduce the hypergeometric calculation in this
theorem; it represents Eq.~\eqref{eq:known-page-curve} as an explicit theorem
assumption for the Page curve transfer.
\par}

\section{Formal verification scope and reproducibility}
\label{app:formal-scope}

{\color{black}
The accompanying Lean~4 development records a corresponding declaration and
verification status for each of the 57 labeled equations.  The results in
Theorems~\ref{thm:external-haar}--\ref{thm:external-gaussian} are represented as
explicit theorem assumptions, and the remaining theorem level relations in
the core proof are kernel checked within the spectral Gaussian state
interface.  The Page curve transfer separately assumes the published mean
limit in Theorem~\ref{thm:published-page-mean}.

The formalization transcribes the trace definition in
Eq.~\eqref{eq:vn-definition}, but it does not construct bosonic Fock space,
trace class density operators, partial traces, or operator logarithms.
Consequently, the operator trace formulation and the abstract spectral model
are not identified internally.  Theorem~\ref{thm:external-gaussian} supplies
the covariance and spectral bridge; from the thermal spectrum onward, the
entropy calculation and the finite matrix, probability, and asymptotic
arguments are formalized.
The non-typicality result in Appendix~\ref{app:no-strong} is outside the scope
of the formalization.
\par}

The package pins stable Lean~4.33.0 and the following Mathlib commit:
\begin{center}
 \footnotesize\texttt{641fbd329d4ffb62bef83c51f54088469056bd36}.
\end{center}
Build instructions and evidence are in \path{REPRODUCIBILITY.md} and
\path{BUILD_EVIDENCE.md}; the equation and axiom audits are
\path{scripts/check_equation_labels.sh} and
\path{Paper/EquationEndpointsAxiomAudit.lean}.  The package includes clean
build CI and omits toolchain specific \texttt{.olean} files.

\section{Failure of strong typicality away from the balanced cut}
\label{app:no-strong}

This appendix uses an additional random matrix input that is not needed for
the weak typicality theorem: the central limit theorem for smooth linear
statistics of the \(\beta\)-Jacobi ensemble.  It determines enough of the order
one fluctuation scale to rule out strong typicality, equivalently vanishing
additive deviation from the mean.

\setcounter{theorem}{0}
\renewcommand{\thetheorem}{C.\arabic{theorem}}
\renewcommand{\theHtheorem}{appendixC.\arabic{theorem}}

\begin{theorem}[Failure of strong typicality]
\label{thm:no-strong}
Fix \(s\in\R\setminus\{0\}\), and let \(k_n/n\to r\in(0,1)\) with
\(r\ne\tfrac12\).  If \(U_n\) has normalized Haar distribution on
\(\U(n)\), then
\[
 S_{1,n}(U_n)-\E S_{1,n}(U_n)
 \not\xrightarrow{\Pp}0.
\]
Thus the von Neumann entropy is not strongly typical at any fixed nonbalanced
proportional bipartition: its additive deviation from the mean does not vanish.
\end{theorem}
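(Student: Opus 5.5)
The plan is to identify the law of the squared singular values of \(B_U\) with a \(\beta\)-Jacobi ensemble and then apply the Jacobi linear statistics central limit theorem to a smoothed version of \(\phi_s\). Since \(Q_U=UU^{\mathsf T}\) is COE, the squared singular values \(x_j=\tau_j^2\) of its \(k\times k\) principal block are distributed (up to \(x\mapsto1-x\)) as a real \(\beta=1\) Jacobi ensemble. The parameters depend on \(k\) and \(n-k\), and the relevant reference is the known truncated COE / Jacobi correspondence. With this identification, Eq.~\eqref{eq:entropy-spectral-statistic} expresses \(S_{1,n}\) as a linear statistic \(\sum_j\psi_s(x_j)\), where \(\psi_s(x)=\phi_s(\sqrt x)\).

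First I would locate the limiting support. For \(r\ne\tfrac12\), the limiting Jacobi (Wachter-type) density of the \(x_j\) is supported on an interval \([a_-,a_+]\) with \(a_+<1\), except for an atom at \(1\) when \(k>n-k\). That atom has multiplicity \(2k-n\): a complex symmetric \(k\times k\) block of an \(n\times n\) unitary has at least \(2k-n\) singular values exactly equal to \(1\), and the corresponding \(\nu_j=1\) contribute \(h(1)=0\). After discarding those deterministic unit singular values, the remaining \(\min\{k,n-k\}\) points form a Jacobi ensemble whose support stays a fixed distance from \(x=1\). The endpoint singularity of \(\psi_s\) therefore only matters on an event of vanishing probability. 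I would handle that event with the large deviation estimate for the largest Jacobi eigenvalue, or more simply with the rigidity or edge estimates contained in the CLT references.

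Next, I would replace \(\psi_s\) by a \(C^\infty\) function \(\tilde\psi\) that agrees with \(\psi_s\) on a neighborhood of \([a_-,a_+]\). Then \(S_{1,n}=\sum\tilde\psi(x_j)\) with probability \(1-o(1)\). The Jacobi CLT (Johansson, Dumitriu--Paquette, or Jiang) gives that \(\sum\tilde\psi(x_j)-\E\sum\tilde\psi(x_j)\) converges in distribution to \(N(0,\sigma^2)\), where \(\sigma^2\) is the standard \(H^{1/2}\)-type quadratic form of \(\tilde\psi\) on the support. This form vanishes only if \(\tilde\psi\) is constant on \([a_-,a_+]\), and that cannot happen because \(\psi_s\) is strictly monotone there: \(h\) is increasing and \(\nu\) is decreasing in \(t\). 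So \(\sigma^2>0\).

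Finally, I need to transfer this back to \(S_{1,n}-\E S_{1,n}\). Convergence in distribution to a nondegenerate Gaussian of \(Y_n-\E Y_n\), together with \(S_{1,n}=Y_n\) with high probability, shows \(S_{1,n}-\E Y_n\) is not \(o_{\Pp}(1)\). It remains to check that \(\E S_{1,n}-\E Y_n\) converges or stays bounded. This follows from the uniform bound \(|S_{1,n}-Y_n|\le Ck\) combined with an exponentially small exceptional probability. Then \(S_{1,n}-\E S_{1,n}\) has a nondegenerate shifted Gaussian limit and does not tend to \(0\) in probability.

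The main obstacle is the precise matching of the truncated COE block with a Jacobi ensemble, including its parameters and the handling of the forced unit singular values when \(r>\tfrac12\). A second difficulty is verifying that the cited CLT applies with \(\beta=1\) and proportional parameters, together with the needed edge separation from \(1\).
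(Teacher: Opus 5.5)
Your proposal is correct and follows essentially the same route as the paper. The shared steps are: the $\beta=1$ Jacobi identification of the squared singular values, the limiting support $[0,4r_\star(1-r_\star)]$ bounded away from $1$, control of the largest root, a regular extension agreeing with the entropy profile near the support, and the Dumitriu--Paquette central limit theorem with positive variance from nonconstancy. Your removal of the $2k-n$ forced unit singular values is the matrix counterpart of the paper's purity reduction to the smaller block.

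The one real difference is the centering step. You bound $|\E S_{1,n}-\E Y_n|\le Ck\,\Pp\{S_{1,n}\ne Y_n\}$, which needs an exceptional probability of order $o(1/n)$. That is obtainable, for example, from Haar concentration of the $2$-Lipschitz top singular value of the smaller block together with its in-probability limit, and it yields the full Gaussian limit of $S_{1,n}-\E S_{1,n}$. The paper instead argues by contradiction that strong typicality would force $d_n=\E S_{1,n}-\E Y_n$ to be bounded, so it needs only $\Pp\{S_{1,n}\ne Y_n\}\to0$ from Johnstone's largest root theorem.
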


\begin{proof}
Put
\[
 \begin{aligned}
 r_\star&=\min\{r,1-r\}\in(0,\tfrac12),\\
 \ell_n&=\min\{k_n,n-k_n\}.
 \end{aligned}
\]
Because the global Gaussian state is pure, the two complementary reduced
states have the same nonzero spectrum.  We may therefore compute the entropy
on the smaller subsystem, for which \(\ell_n/n\to r_\star\).  Write
\(m=\ell_n\), set
\(Q_n=U_nU_n^{\mathsf T}\), and let \(B_n\) be the leading \(m\) by \(m\)
block of \(Q_n\).  If \(R_{j,n}=\tau_{j,n}^2\) are the squared singular values
of \(B_n\), Eq.~\eqref{eq:entropy-spectral-statistic} gives
\[
 \begin{aligned}
 S_{1,n}(U_n)&=\sum_{j=1}^{m} f_s(R_{j,n}),\\
 f_s(x)&=h\!\left(\sqrt{1+A_s^2(1-x)}\right),\\
 A_s&=\sinh(2s).
 \end{aligned}
\]

The matrix \(Q_n\) is COE distributed.  Partitioning it into reflection and
transmission blocks, unitarity identifies \(1-R_{j,n}\) with the transmission
eigenvalues.  Equation~(1.7) of Forrester~\cite{Forrester2006Jacobi}, with
Dyson index \(\beta=1\), therefore gives the exact joint density
\[
 \begin{aligned}
 p_{m,n}(R)
 &=\frac{1}{Z_{m,n}}
   \prod_{1\le i<j\le m}|R_i-R_j|\\
 &\quad\times
   \prod_{j=1}^{m}(1-R_j)^{(n-2m-1)/2}.
 \end{aligned}
\]
This density is supported on \(R\in(0,1)^m\).
In the notation of Eq.~(1) of Dumitriu and
Paquette~\cite{DumitriuPaquette2012}, this is the real \(\beta\)-Jacobi
ensemble of matrix size \(m\), with
\[
 n_1=m+1,
 \qquad n_2=n-m,
 \qquad \beta=1.
\]
Consequently \(n_1/m\to p=1\) and
\(n_2/m\to q=(1-r_\star)/r_\star>1\).  Proposition~1.1 of
Ref.~\cite{DumitriuPaquette2012} gives the limiting support
\[
 [\lambda_-,\lambda_+]=[0,b_{r_\star}],
 \qquad b_{r_\star}=4r_\star(1-r_\star)<1.
\]
The real Jacobi largest root theorem of
Johnstone~\cite{Johnstone2008Jacobi} further gives, in this proportional
regime,
\[
 \max_{1\le j\le m}R_{j,n}\xrightarrow{\Pp}b_{r_\star}.
\]

Choose \(\delta>0\) with \(b_{r_\star}+2\delta<1\).  Since the only loss of
regularity of \(f_s\) on \([0,1]\) occurs at \(x=1\), there is a function
\(F_s\in C^1([0,1])\) that agrees with \(f_s\) on
\([0,b_{r_\star}+\delta]\).  Define the smooth Jacobi statistic
\[
 Y_n=\sum_{j=1}^{m}F_s(R_{j,n}).
\]
The largest root convergence implies
\[
 \Pp\{Y_n=S_{1,n}(U_n)\}\longrightarrow1,
 \qquad
 Y_n-S_{1,n}(U_n)\xrightarrow{\Pp}0.
\]

Theorem~1.3 of Dumitriu and Paquette~\cite{DumitriuPaquette2012} now yields
\[
\begin{aligned}
Y_n-\E Y_n&\xrightarrow{\mathrm d}
  \mathcal N(0,\sigma_{s,r_\star}^2),\\
\sigma_{s,r_\star}^2
  &=2\sum_{\ell=1}^{\infty}\ell
    |\widehat F_s(\ell)|^2.
\end{aligned}
\]
Here the shifted Chebyshev coefficients on \([0,b_{r_\star}]\) are
\[
 \begin{aligned}
 \widehat F_s(\ell)
 &=\frac{1}{2\pi}\int_0^{b_{r_\star}}
   \frac{F_s(x)\,\Gamma_\ell(x)}
        {\sqrt{(b_{r_\star}-x)x}}\,dx,\\
 \Gamma_\ell(x)
 &=2T_\ell\!\left(\frac{2x-b_{r_\star}}{b_{r_\star}}\right).
 \end{aligned}
\]
Assumption~1.2 of that reference is printed as
\(n_1=pm\), \(n_2=qm\) with fixed \(p,q\).  The triangular array form used
here, including \(n_1=m+1\) and \(n_2/m\to q\), follows from the same proof:
the
polynomial fluctuation calculation is termwise continuous in the two ratios,
and the \(C^1\) approximation and Poincar\'{e} estimates in its Section~4 are
uniform when the ratios remain in a compact subset of
\(p,q\ge1\), \(p+q>2\).  Thus the triangular array above has the same
limiting covariance at \((p,q)=(1,(1-r_\star)/r_\star)\).

The limiting variance is strictly positive.  Indeed, \(s\ne0\) makes
\(f_s\), hence \(F_s\) on \([0,b_{r_\star}]\), strictly nonconstant.  If every
coefficient \(\widehat F_s(\ell)\) with \(\ell\ge1\) vanished, completeness of
the shifted Chebyshev system would make \(F_s\) constant almost everywhere
for the arcsine weight on that interval, and continuity would make it
constant everywhere, a contradiction.  Hence
\(\sigma_{s,r_\star}^2>0\).

It remains to compare the two centerings.  Suppose, to the contrary, that
\(X_n=S_{1,n}(U_n)\) satisfied \(X_n-\E X_n\to0\) in probability.  Set
\(d_n=\E X_n-\E Y_n\).  Since \(Y_n-X_n\to0\) in probability,
\[
 (Y_n-\E Y_n)-d_n
 =(Y_n-X_n)+(X_n-\E X_n)
 \xrightarrow{\Pp}0.
\]
The Gaussian convergence makes \(Y_n-\E Y_n\) tight, so the deterministic
sequence \(d_n\) must be bounded.  Along a subsequence on which \(d_n\to d\),
the last display would force \(Y_n-\E Y_n\) to converge in distribution to
the point mass at \(d\).  This contradicts its convergence to the
nondegenerate Gaussian \(\mathcal N(0,\sigma_{s,r_\star}^2)\).  Therefore
strong typicality fails.
\end{proof}

The exclusion of \(r=\tfrac12\) is a limitation of this particular argument,
not evidence for strong typicality at the balanced cut.  There
\(b_{r_\star}=1\), so the limiting Jacobi support reaches the endpoint where
\(f_s'(x)\) diverges logarithmically, and the cited \(C^1\) central limit
theorem cannot be applied directly.  Theorem~\ref{thm:no-strong} and its
{\color{black}Jacobi inputs are used only for this additional strong typicality
conclusion.}

\section*{Author contributions and use of artificial intelligence}

H.Z. conceived the project, developed and checked the mathematical argument,
prepared the formalization, and wrote and revised the manuscript.  OpenAI
ChatGPT and Codex tools were used under the author's direction for literature
search assistance, proof exploration, Lean~4 code drafting, algebraic
cross checking, and language editing.  The resulting Lean source was compiled
and checked by the Lean kernel.  The author reviewed all mathematical
statements, citations, source files, and final text and accepts full
responsibility for the work.

\section*{Data and code availability}

No experimental data were generated.  The Lean~4 source archive contains the
equation registry, permanent kernel audits, build instructions, and CI.  It
{\color{black}contains complete formal proofs for the formalized results and records the cited
external results as explicit theorem assumptions.}  It is released under
\texttt{GPL-3.0-only}; that license does not cover the
manuscript or third party files.  The versioned archive is on
Zenodo~\cite{Zhao2026LeanVerification}:
\url{https://doi.org/10.5281/zenodo.21969265}.

\bibliographystyle{unsrt}
\bibliography{references}

\end{document}